 \DeclareMathOperator{\Col}{Col}
 \DeclareMathOperator{\lcm}{lcm}
\def\ra{\rightarrow}
\def\lra{\leftrightarrow}
\def\a{\alpha}
\def\d{\delta}
\def\D{\Delta}
\def\0{{\bf 0}}
\def\J{J}
\newcommand{\R}{{\mathbb R}}
\newtheorem{thm}{Theorem}[section]
\newtheorem{lem}[thm]{Lemma}
\newtheorem{dfn}[thm]{Definition}
\newtheorem{prp}[thm]{Proposition}
\newtheorem{exa}[thm]{Example}
\newtheorem{rem}[thm]{Remark}
\newtheorem{alg}[thm]{Algorithm}
\begin{document}

\begin{center}

{\Large\bf Invariant Subspace Approach to Boolean (Control) Networks
\footnote{ Supported partly by NNSF 62073315 of China, and China Postdoctoral Science Foundation 2020TQ0184.} }
\end{center}

\vskip 2mm

\begin{center}
{\Large  Daizhan Cheng$~^*$\dag, Lijun Zhang\ddag, Dongyao Bi\ddag}

\vskip 2mm

$~^*$ Center of STP Theory and Applications, LiaoCheng University, LiaoCheng 252000, P.R.China\\
\dag Institute of  Systems Science,
Chinese Academy of Sciences, Beijing 100190, P.R.China\\
E-mail: dcheng@iss.ac.cn\\

\ddag Northwestern Polytechnical University\\

\vskip 2mm

\end{center}

\vskip \baselineskip

\underline{\bf Abstract}: A logical function can be used to characterizing a property of a state of Boolean network (BN), which is considered as an aggregation of states.  To illustrate the dynamics of a set of logical functions, which characterize our concerned properties of a BN, the invariant subspace containing the set of logical functions is proposed, and its properties are investigated. Then the invariant subspace of Boolean control network (BCN) is also proposed. The dynamics of invariant subspace of BCN is also invariant. Finally, using outputs as the set of logical functions, the minimum realization of BCN is proposed, which provides a possible solution to overcome the computational complexity of large scale BNs/BCNs.

\vskip 5mm

\underline{\bf Keywords}:
Boolean (control) network, logical function, invariant subspace, minimum realization, semi-tensor product.

\vskip 5mm

\section{Introduction}

The BN was firstly proposed  by Kauffman in 1969 \cite{kau69}. It has been proved to be a very efficient way for modeling and analyzing genetic regulatory network. Recently,  motivated by the semi-tensor product (STP) of matrices, the investigation of BN and BCN becomes a heat research direction in control community. Nowadays, the STP approach becomes the mainstream in studying BNs and BCNs. We refer to some survey papers for its current development in theory and applications \cite{for16}, \cite{lu17}, \cite{muh16}, \cite{li18}.

The major obstacle in applications of STP approach to BNs and BCNs is the computational complexity. It is well known that BN structure analysis and BCN control design and many related problems are NP hard problem \cite{zha05}, while BNs from gene regulatory networks are usually of large scale. For a network with $n$ nodes, the state space of BN or BCN in STP model  is of $2^n$ states. Hence, in general, the STP approach can only handle $n<20$ cases or so.

A proper tool in dealing with large-scale BN (BCN) is aggregation \cite{zha13,zha15,pan20}. To the authors' best knowledge, the aggregation proposed so far is based on the structure of networks. This method has some weaknesses. First, it requires the knowledge on the structure of networks. It is not an easy job to get the structure of a large scale network. Second, such an aggregation does not represent certain properties of the nodes. Sometimes, classifying nodes according to their various properties is more important then their positions.

Support vector machine approach is a very powerful tool in aggregation, where it is called pattern recognition \cite{bur98,hay09}.
In support vector machine a hyperplane $w^Tx+b$, which separates points into two groups: $\{x\;|\;w^Tx+b>0\}$ and $\{x\;|\;w^Tx+b<0$, (refer to Fig. \ref{Fig.1.1}).

\vskip 2mm

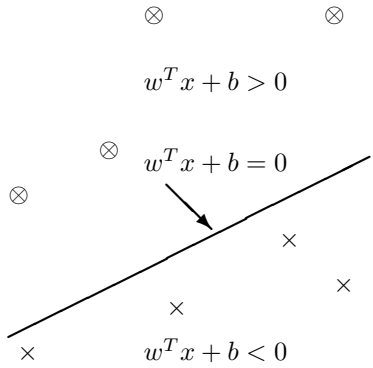
\begin{figure}
\centering
\setlength{\unitlength}{6mm}
\begin{picture}(10,10)\thicklines
\put(1,1){\line(2,1){8}}
\put(1,4){$\otimes$}
\put(1.2,0.5){$\times$}
\put(8,8){$\otimes$}
\put(8.2,2){$\times$}
\put(3,5){$\otimes$}
\put(4.5,1.5){$\times$}
\put(4,8){$\otimes$}
\put(7,3){$\times$}
\put(4,4.7){$w^Tx+b=0$}
\put(4,6.5){$w^Tx+b>0$}
\put(4,0.5){$w^Tx+b<0$}
\put(4.5,4.4){\vector(1,-1){1}}

\end{picture}
\caption{Hyperplane For Point Separation}\label{Fig.1.1}
\end{figure}

\vskip 2mm

This paper uses the idea of support vector machine to aggregation of nodes in a large-scale Boolean network.
A logical function  $g(x)$ is considered as a support vector, which classifies nodes into two groups: $\{x\;|\;g(x)=1\}$ and $\{x\;|\;g(x)=0$, (refer to Fig. \ref{Fig.1.2}).

\vskip 2mm

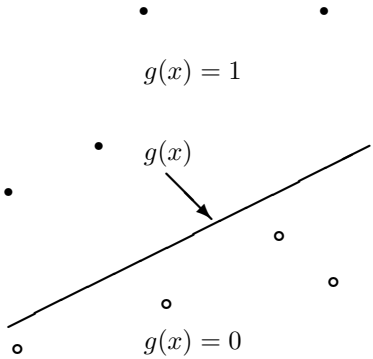
\begin{figure}
\centering
\setlength{\unitlength}{6mm}
\begin{picture}(10,10)\thicklines
\put(1,1){\line(2,1){8}}
\put(1,4){\circle*{0.2}}
\put(1.2,0.5){\circle{0.2}}
\put(8,8){\circle*{0.2}}
\put(8.2,2){\circle{0.2}}
\put(3,5){\circle*{0.2}}
\put(4.5,1.5){\circle{0.2}}
\put(4,8){\circle*{0.2}}
\put(7,3){\circle{0.2}}
\put(4,4.7){$g(x)$}
\put(4,6.5){$g(x)=1$}
\put(4,0.5){$g(x)=0$}
\put(4.5,4.4){\vector(1,-1){1}}

\end{picture}
\caption{Logical Function For State Separation}\label{Fig.1.2}
\end{figure}

\vskip 2mm

Several logical functions, which form a set of support vectors for various properties, become a subspace. Using this subspace, we may construct a logical dynamic system, which describes the dynamics of aggregated classis. Since this dynamic system might be much smaller than the original one, the computational complexity could be reduced a lot.

Roughly speaking, the idea for the approach in this paper is as follows: First, some logical functions are chosen to characterize some properties of a BN/BCN, concerned by us. Then the smallest subspaces containing the set of logical functions, which is invariant under the dynamic evolution. The dynamic equation for the subspace is revealed, which completely described the evolution of the concerned logical variables, which correspond the set of logical functions. Finally, the outputs of a BCN are considered as the set of concerned logical functions, which lead to a minimum realization of the original BCN.

The rest of this paper is follows: Section 2 presents some preliminaries as follows: (i) STP of matrices, which is the fundamental tool for our approach; (ii) Matrix expression of BN and BCN, which is called the algebraic state space representation (ASSR). Section 3 presents the separating subspace approach for BNs. The separating logical functions and the invariant subspace containing the set of logical functions is constructed, and its properties are investigated. Finally, the dynamic equation is obtained for the invariant subspace. The invariant subspace and its dynamic equation of BCN are considered in Section 4. Section 5 considers the minimum realization of a BCN. Their dynamic equations are also revealed. Section 6 is a brief conclusion.

\section{Preliminaries}

\subsection{STP of Matrices}

\begin{dfn} \label{d2.1.1} \cite{che11, che12}:
Let $M\in {\cal M}_{m\times n}$, $N\in {\cal M}_{p\times q}$, and $t=\lcm\{n,p\}$ be the least common multiple of $n$ and $p$.
The semi-tensor product (STP) of $M$ and $N$, denoted by $M\ltimes N$, is defined as
\begin{align}\label{2.1.1}
\left(M\otimes I_{t/n}\right)\left(N\otimes I_{t/p}\right)\in {\cal M}_{mt/n\times qt/p},
\end{align}
where $\otimes$ is the Kronecker product.
\end{dfn}

Note that when $n=p$, $M\ltimes N=MN$. That is, the semi-tensor product is a generalization of conventional matrix product. Moreover, it keeps all the properties of conventional matrix product available \cite{che12}. Hence we can omit the symbol $\ltimes$. Throughout this paper the matrix product is assumed to be STP, and the symbol $\ltimes$ is mostly omitted.

The following are some basic properties:

\begin{prp}\label{p2.1.2}
\begin{enumerate}
\item (Associative Law)
\begin{align}\label{2.1.2}
(F\ltimes G)\ltimes H = F\ltimes (G\ltimes H).
\end{align}
\item (Distributive Law)
\begin{align}\label{2.1.3}
\begin{cases}
F\ltimes (aG\pm bH)=aF\ltimes G\pm bF\ltimes H,\\
(aF\pm bG)\ltimes H=a F\ltimes H \pm bG\ltimes H,\quad a,b\in \R.
\end{cases}
\end{align}
\end{enumerate}
\end{prp}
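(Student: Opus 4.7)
My plan is to derive both laws directly from the definition \eqref{2.1.1} of $\ltimes$ together with three standard identities for the Kronecker product: bilinearity, the mixed-product rule $(A\otimes B)(C\otimes D)=(AC)\otimes(BD)$ whenever the inner dimensions match, and $I_a\otimes I_b=I_{ab}$.

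For the distributive law, suppose $F\in{\cal M}_{m\times n}$ and $G,H\in{\cal M}_{p\times q}$, and set $t=\lcm(n,p)$. Bilinearity of $\otimes$ gives $(aG\pm bH)\otimes I_{t/p}=a(G\otimes I_{t/p})\pm b(H\otimes I_{t/p})$, and then distributivity of ordinary matrix multiplication yields
$$
F\ltimes(aG\pm bH)=(F\otimes I_{t/n})\bigl((aG\pm bH)\otimes I_{t/p}\bigr)=a F\ltimes G\pm b F\ltimes H.
$$
The left-sided identity is obtained symmetrically, so this part is essentially mechanical once the definition is unfolded.

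For the associative law, let $F,G,H$ have sizes $m\times n,\,p\times q,\,r\times s$. The idea is to reduce both $(F\ltimes G)\ltimes H$ and $F\ltimes(G\ltimes H)$ to the common three-factor form
$$
(F\otimes I_{a})(G\otimes I_{b})(H\otimes I_{c})
$$
for a single triple $(a,b,c)$, after which associativity of ordinary matrix multiplication closes the argument. Concretely, I would first expand $F\ltimes G=(F\otimes I_{t_1/n})(G\otimes I_{t_1/p})$ with $t_1=\lcm(n,p)$, then use the mixed-product rule to pull the identity factor $I_{k}:=I_{t_2/(qt_1/p)}$ arising from the outer $\ltimes$ through both tensor-padded factors, namely
$$
(F\ltimes G)\otimes I_k = (F\otimes I_{t_1/n}\otimes I_k)(G\otimes I_{t_1/p}\otimes I_k) = (F\otimes I_{k t_1/n})(G\otimes I_{k t_1/p}),
$$
and handle $F\ltimes(G\ltimes H)$ in exactly the same way. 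Collapsing nested identities via $I_a\otimes I_b=I_{ab}$ brings each side to a three-factor product of the required shape.

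The main obstacle I anticipate is the index bookkeeping: the two sides arise from different nested LCMs---$\lcm(\lcm(n,p)\cdot q/p,\,r)$ on the left and $\lcm(n,\,p\cdot\lcm(q,r)/q)$ on the right---and one must verify that after the mixed-product manipulations the two triples $(a,b,c)$ coincide. This reduces to the observation that both padding sequences produce the \emph{smallest} common refinement making the three Kronecker-padded factors multiplication-compatible; equivalently, both yield the same minimal row-dimension matches $na=pb$ and $qb=rc$. Once this alignment is verified, no spurious Kronecker factor remains and the two sides are literally equal as matrices.
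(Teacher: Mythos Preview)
The paper does not supply a proof of Proposition~\ref{p2.1.2}; it records these laws as known properties of the semi-tensor product, with reference to \cite{che11,che12}. So there is no ``paper's proof'' to compare against, and your proposal is doing more than the paper does.

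Your argument for the distributive law is complete. For associativity, your strategy---reducing each side to a single triple product $(F\otimes I_a)(G\otimes I_b)(H\otimes I_c)$ via the mixed-product rule and $I_a\otimes I_b=I_{ab}$---is the standard one and is correct. The ``main obstacle'' you flag is real but resolvable exactly along the lines you indicate. Concretely, with $t_1=\lcm(n,p)$ and $t_2=\lcm(qt_1/p,\,r)$, the middle index produced on the left is $b_L=t_2/q$, and one checks that $b_L$ is the \emph{minimal} positive integer $b$ with $n\mid pb$ and $r\mid qb$: any such $b$ has $\lcm(n,p)\mid pb$, hence $qt_1/p\mid qb$; together with $r\mid qb$ this forces $t_2\mid qb$, i.e.\ $b_L\mid b$. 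A symmetric computation with $t_3=\lcm(q,r)$ and $t_4=\lcm(n,\,pt_3/q)$ shows the right-hand middle index $b_R=t_4/p$ is the same minimal $b$. Once $b_L=b_R$, the outer indices agree automatically via $a=pb/n$ and $c=qb/r$, and the two triple products coincide. Filling in this minimality check is the only thing your sketch leaves implicit; with it, the proof is complete.
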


\begin{prp}\label{p2.1.3}
\begin{enumerate}
\item Let $X\in \R^m$, $Y\in \R^n$ be two columns. Then
\begin{align}\label{2.1.4}
X\ltimes Y=X\otimes Y.
\end{align}
\item Let $\omega \in \R^m$, $\sigma\in \R^n$ be two rows. Then
\begin{align}\label{2.1.5}
\omega\ltimes \sigma =\sigma \otimes \omega.
\end{align}
\end{enumerate}
\end{prp}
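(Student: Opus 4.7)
The plan is to unpack Definition \ref{d2.1.1} in the two special cases and reduce each identity to a routine Kronecker manipulation.

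For part (1), $X$ is $m\times 1$ and $Y$ is $n\times 1$, so in the definition we take $(m,n,p,q)=(m,1,n,1)$ and $t=\lcm(1,n)=n$. The definition then collapses to
$$
X\ltimes Y=(X\otimes I_{n})(Y\otimes I_{1})=(X\otimes I_{n})Y.
$$
I would finish by rewriting $Y$ as $1\cdot Y$ where the leading $1$ is a $1\times 1$ matrix, viewing $Y=1\otimes Y$, and then applying the mixed-product rule $(A\otimes B)(C\otimes D)=(AC)\otimes(BD)$ to obtain $(X\otimes I_n)(1\otimes Y)=X\otimes Y$. This step is purely mechanical.

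For part (2), $\omega$ is $1\times m$ and $\sigma$ is $1\times n$, so in the definition $(m,n,p,q)=(1,m,1,n)$ and $t=\lcm(m,1)=m$. The definition reduces to
$$
\omega\ltimes\sigma=(\omega\otimes I_{1})(\sigma\otimes I_{m})=\omega(\sigma\otimes I_{m}).
$$
The identity to prove is therefore $\omega(\sigma\otimes I_{m})=\sigma\otimes\omega$, and this is the part of the proof that requires actual verification rather than a one-line rewrite, because the reversal of the tensor order (the right-hand side is $\sigma\otimes\omega$, not $\omega\otimes\sigma$) cannot come from a naive application of the mixed-product rule. My plan is to do it by a direct index computation: parameterize a column index of the $1\times mn$ output as $k=(i-1)m+j$ with $1\le i\le n$, $1\le j\le m$, observe that the block structure $\sigma\otimes I_m=[\sigma_1 I_m,\sigma_2 I_m,\dots,\sigma_n I_m]$ puts $\sigma_i$ in row $j$ of column $k$ (and zero elsewhere), hence the $k$-th entry of $\omega(\sigma\otimes I_m)$ is $\omega_j\sigma_i$; this matches the $k$-th entry $\sigma_i\omega_j$ of $\sigma\otimes\omega$ by the definition of Kronecker product of two rows.

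The main obstacle is precisely the index bookkeeping in part (2): getting the correct ordering convention for the Kronecker product so that the order-reversal $\omega\otimes\sigma\mapsto\sigma\otimes\omega$ comes out right. Everything else is a direct substitution into Definition \ref{d2.1.1} followed by the mixed-product identity.
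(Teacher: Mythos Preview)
Your proof is correct. The paper itself does not prove Proposition~\ref{p2.1.3}; it is listed among the preliminary STP properties quoted from \cite{che11,che12} without argument, so there is no ``paper's own proof'' to compare against.

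One minor simplification: your index computation in part~(2) is unnecessary. The same mixed-product trick you used in part~(1) works here too, provided you factor $\omega$ on the correct side. Writing $\omega = 1\otimes \omega$ (with $1$ the $1\times 1$ scalar), the dimensions line up for the mixed-product rule:
\[
\omega(\sigma\otimes I_m) = (1\otimes \omega)(\sigma\otimes I_m) = (1\cdot\sigma)\otimes(\omega I_m) = \sigma\otimes\omega.
\]
This is exactly the order reversal you were worried about, and it falls out automatically once you place the scalar $1$ on the left of $\omega$ rather than on the right. So the ``main obstacle'' you flagged is not really an obstacle.
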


About the transpose, we have
\begin{prp}\label{p2.1.5}
\begin{align}
\label{2.1.6} (A\ltimes B)^\mathrm{T}=B^\mathrm{T}\ltimes A^\mathrm{T}.
\end{align}
\end{prp}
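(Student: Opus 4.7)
The plan is to unravel the definition of $\ltimes$ in Definition~\ref{d2.1.1}, apply the standard transpose rule for ordinary matrix multiplication together with the transpose rule for Kronecker products, and then recognize the resulting expression as $B^{\mathrm{T}}\ltimes A^{\mathrm{T}}$.

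First I would fix $A\in {\cal M}_{m\times n}$ and $B\in {\cal M}_{p\times q}$, set $t=\lcm\{n,p\}$, and write out
$$A\ltimes B=(A\otimes I_{t/n})(B\otimes I_{t/p})$$
directly from Definition~\ref{d2.1.1}. The two factors on the right have compatible inner dimension $t$, so this is an honest matrix product and the classical rule $(XY)^{\mathrm{T}}=Y^{\mathrm{T}}X^{\mathrm{T}}$ applies, giving
$$(A\ltimes B)^{\mathrm{T}}=(B\otimes I_{t/p})^{\mathrm{T}}(A\otimes I_{t/n})^{\mathrm{T}}.$$

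Next I would invoke the well-known identities $(X\otimes Y)^{\mathrm{T}}=X^{\mathrm{T}}\otimes Y^{\mathrm{T}}$ and $I_k^{\mathrm{T}}=I_k$ to simplify each factor, obtaining
$$(A\ltimes B)^{\mathrm{T}}=(B^{\mathrm{T}}\otimes I_{t/p})(A^{\mathrm{T}}\otimes I_{t/n}).$$
Because $\lcm\{p,n\}=\lcm\{n,p\}=t$, the value of $t$ does not change when we interchange the roles of the two factors. Applying Definition~\ref{d2.1.1} to $B^{\mathrm{T}}\in {\cal M}_{q\times p}$ and $A^{\mathrm{T}}\in {\cal M}_{n\times m}$ produces the same expression, so the right-hand side is exactly $B^{\mathrm{T}}\ltimes A^{\mathrm{T}}$, which closes the argument.

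I do not expect a serious obstacle here; the statement is essentially bookkeeping that combines two standard transpose identities. The only points requiring care are verifying that the inner dimensions still match after transposition and that the least common multiple used when forming $B^{\mathrm{T}}\ltimes A^{\mathrm{T}}$ coincides with the one used for $A\ltimes B$. Both checks are immediate from the symmetry of $\lcm$, so the proof is a short chain of equalities rather than a conceptual argument.
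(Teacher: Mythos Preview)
Your argument is correct: unfolding Definition~\ref{d2.1.1}, applying the ordinary transpose rule for matrix products together with $(X\otimes Y)^{\mathrm{T}}=X^{\mathrm{T}}\otimes Y^{\mathrm{T}}$, and observing that $\lcm\{p,n\}=\lcm\{n,p\}$ is exactly the standard verification. The paper itself does not supply a proof of Proposition~\ref{p2.1.5}; it is recorded there as a known preliminary fact from \cite{che11,che12}, so your direct computation is precisely the intended justification and there is nothing further to compare.
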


About the inverse, we have

\begin{prp}\label{p2.1.6}
Assume $A$ and $B$ are invertible, then
\begin{align} \label{2.1.7}
(A\ltimes B)^{-1}=B^{-1}\ltimes A^{-1}.
\end{align}
\end{prp}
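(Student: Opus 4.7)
The plan is to unfold the definition of the STP directly and reduce the claim to the standard Kronecker product identity $(X\otimes I_k)^{-1}=X^{-1}\otimes I_k$, together with the ordinary rule $(XY)^{-1}=Y^{-1}X^{-1}$. First I would observe that invertibility forces $A$ and $B$ to be square, so write $A\in\mathcal M_{n\times n}$ and $B\in\mathcal M_{p\times p}$, and set $t=\lcm\{n,p\}$. By Definition~\ref{d2.1.1},
\begin{align*}
A\ltimes B=(A\otimes I_{t/n})(B\otimes I_{t/p}),
\end{align*}
and both factors are square $t\times t$ matrices. Since $A$ and $B$ are invertible, so are $A\otimes I_{t/n}$ and $B\otimes I_{t/p}$, with explicit inverses $A^{-1}\otimes I_{t/n}$ and $B^{-1}\otimes I_{t/p}$ respectively (standard Kronecker-product fact, proved by checking $(X\otimes I_k)(X^{-1}\otimes I_k)=I_n\otimes I_k=I_{nk}$).

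Next I would apply the usual inversion-of-a-product rule to get
\begin{align*}
(A\ltimes B)^{-1}=(B\otimes I_{t/p})^{-1}(A\otimes I_{t/n})^{-1}=(B^{-1}\otimes I_{t/p})(A^{-1}\otimes I_{t/n}).
\end{align*}
Finally, since $\lcm\{p,n\}=\lcm\{n,p\}=t$, the right-hand side is, by the very same Definition~\ref{d2.1.1}, exactly $B^{-1}\ltimes A^{-1}$, which is what is claimed.

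There really is no main obstacle — the statement is a routine consequence of the definition, in complete analogy with Proposition~\ref{p2.1.5} for the transpose. The only point worth being careful about is checking that the ``$t$'' produced by the pair $(A,B)$ and the ``$t$'' produced by the pair $(B^{-1},A^{-1})$ coincide, which is immediate from symmetry of $\lcm$. If one wanted to avoid even this small computation, an alternative route would be to invoke associativity (Proposition~\ref{p2.1.2}) and verify $(A\ltimes B)\ltimes(B^{-1}\ltimes A^{-1})=A\ltimes I_p\ltimes A^{-1}$ and reduce this to $I_t$; but the direct Kronecker expansion above is shorter and more transparent.
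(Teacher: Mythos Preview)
Your argument is correct. The paper itself does not supply a proof of Proposition~\ref{p2.1.6}; it is listed among the basic properties of the STP quoted from \cite{che11,che12} without justification. Your direct expansion via Definition~\ref{d2.1.1} and the Kronecker identity $(X\otimes I_k)^{-1}=X^{-1}\otimes I_k$ is exactly the standard verification found in those references, and the check that the $\lcm$ is the same for the pair $(B^{-1},A^{-1})$ as for $(A,B)$ is the only subtle point, which you handle.
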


The following property is for STP only.

\begin{prp}\label{p2.1.7} Let $X\in \R^m$ be a column and $M$ a matrix. Then
\begin{align}\label{2.1.8}
X\ltimes M=\left(I_m\otimes M\right)X.
\end{align}
\end{prp}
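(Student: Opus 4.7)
The plan is to unfold both sides of the identity via Definition \ref{d2.1.1} and reduce each to the single Kronecker product $X\otimes M$ by invoking the classical mixed-product rule $(A\otimes B)(C\otimes D)=(AC)\otimes(BD)$, valid whenever the interior ordinary products are defined.

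Fix $M\in{\cal M}_{p\times q}$, so that both sides should live in ${\cal M}_{mp\times q}$. For the left-hand side, $X$ has $n=1$ column and $M$ has $p$ rows, so $t=\lcm(1,p)=p$, and Definition \ref{d2.1.1} yields
\begin{equation*}
X\ltimes M=(X\otimes I_{p})(M\otimes I_{1})=(X\otimes I_{p})M.
\end{equation*}
Rewriting the trailing $M$ as $I_{1}\otimes M$ and applying the mixed-product rule collapses this to $(XI_{1})\otimes(I_{p}M)=X\otimes M$. For the right-hand side I would first read the juxtaposition $(I_{m}\otimes M)X$ as a semi-tensor product, consistent with the convention stated right after Definition \ref{d2.1.1} (the ordinary product is in any case undefined unless $q=1$). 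The first factor is $mp\times mq$ and $X$ has $m$ rows, so now $t=\lcm(mq,m)=mq$, and Definition \ref{d2.1.1} turns the STP into $(I_{m}\otimes M)(X\otimes I_{q})$. One more application of the mixed-product rule rewrites this as $(I_{m}X)\otimes(MI_{q})=X\otimes M$.

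Since both expressions evaluate to the common value $X\otimes M$, the identity follows. There is no real obstacle: the argument is essentially dimensional bookkeeping plus a single standard Kronecker identity. The only point where I would be a bit careful is choosing the correct $t=\lcm(\cdot,\cdot)$ in each of the two applications of Definition \ref{d2.1.1}, since the roles of the ``inner'' dimensions differ between the two sides ($(1,p)$ on the left versus $(mq,m)$ on the right).
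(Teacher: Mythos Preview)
Your argument is correct: both sides reduce to $X\otimes M$ via Definition \ref{d2.1.1} and the mixed-product identity, and your dimensional bookkeeping for the two $\lcm$ computations is accurate. The paper itself does not supply a proof of Proposition \ref{p2.1.7}; it is listed among the preliminary STP properties quoted from \cite{che11,che12}, so there is nothing to compare your approach against.
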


\begin{dfn}\label{d2.1.8} \cite{che11} A matrix $W_{[m,n]}\in {\cal M}_{mn\times mn}$, defined by
\begin{align}\label{2.1.9}
W_{[m,n]}:=\left[I_n\otimes \d_{m}^1, I_n\otimes \d_{m}^2,\cdots,I_n\otimes \d_{m}^m, \right]
\end{align}
is called the $(m,n)$-th dimensional swap matrix, where $\d_m^i$ is the $i$-th column of $I_m$.
\end{dfn}

The basic function of the swap matrix is to ``swap" two vectors. That is,

\begin{prp}\label{p2.1.10} Let $X\in \R^m$ and $Y\in \R^n$ be two columns. Then
\begin{align}\label{2.1.11}
W_{[m,n]}\ltimes X\ltimes Y=Y\ltimes X.
\end{align}
\end{prp}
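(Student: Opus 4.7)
The plan is to verify the identity on the standard basis vectors of $\R^m$ and $\R^n$ and then appeal to bilinearity. Since both sides $W_{[m,n]}\ltimes X\ltimes Y$ and $Y\ltimes X$ are bilinear in $(X,Y)$ (using the distributive law in Proposition~\ref{p2.1.2} for the left-hand side and the definition $X\ltimes Y=X\otimes Y$ from Proposition~\ref{p2.1.3} together with bilinearity of $\otimes$ for the right-hand side), it is enough to check the equality for $X=\d_m^i$ and $Y=\d_n^j$ as $i$ ranges over $1,\ldots,m$ and $j$ over $1,\ldots,n$.

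Next, I would compute both sides explicitly on these basis pairs. On the right, Proposition~\ref{p2.1.3} gives
\[
Y\ltimes X=\d_n^j\otimes \d_m^i=\d_{mn}^{(j-1)m+i}.
\]
On the left, I first combine $X\ltimes Y=\d_m^i\otimes \d_n^j=\d_{mn}^{(i-1)n+j}$, so that
\[
W_{[m,n]}\ltimes X\ltimes Y=W_{[m,n]}\,\d_{mn}^{(i-1)n+j},
\]
which picks out the $((i-1)n+j)$-th column of $W_{[m,n]}$.

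The remaining step is to read off that column from the block description \eqref{2.1.9}. The $i$-th block $I_n\otimes \d_m^i$ contributes $n$ consecutive columns, occupying positions $(i-1)n+1,\ldots,in$; its $j$-th column is $\d_n^j\otimes \d_m^i$. Hence the $((i-1)n+j)$-th column of $W_{[m,n]}$ is exactly $\d_n^j\otimes \d_m^i$, matching the expression for $Y\ltimes X$ above.

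The only delicate point is the bookkeeping of the two different lexicographic orderings (``$i$ then $j$'' versus ``$j$ then $i$'') on the index set $\{1,\ldots,mn\}$; confusing $(i-1)n+j$ with $(j-1)m+i$ is the easiest way to go wrong, so I would write the two orderings side by side before reading off the column of $W_{[m,n]}$. Once that is pinned down, the proof is a one-line identification, and the extension to arbitrary $X,Y$ is immediate by linearity.
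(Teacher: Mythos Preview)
Your proof is correct. Note, however, that the paper does not actually prove Proposition~\ref{p2.1.10}; it is stated in the Preliminaries as a known property of the swap matrix, with the surrounding results cited from \cite{che11,che12}. Your argument---checking the identity on basis pairs $(\d_m^i,\d_n^j)$ by reading off the $((i-1)n+j)$-th column of $W_{[m,n]}$ from its block description and then extending by bilinearity---is exactly the standard verification one finds in those references, and the index bookkeeping you flag is indeed the only place where care is needed.
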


\begin{dfn}
Let $A\in \mathcal{M}_{p\times n}$ and $B\in \mathcal{M}_{q\times n}$. Then the Khatri-Rao Product of $A$ and $B$ is
  \begin{align}
  \begin{array}{ccl}
  A*B=[\Col_1(A)\ltimes \Col_1(B),\cdots,\Col_n(A)\ltimes \Col_n(B)]
  \in \mathcal{M}_{pq\times n}.
  \end{array}
  \end{align}
\end{dfn}

\subsection{Matrix Expression of BN}

\begin{dfn}\label{d2.2.1} A BN is described by
\begin{align}\label{2.2.1}
\begin{cases}
x_1(t+1)=f_1(x_1(t),\cdots,x_n(t)),\\
x_2(t+1)=f_2(x_1(t),\cdots,x_n(t)),\\
\vdots\\
x_n(t+1)=f_n(x_1(t),\cdots,x_n(t)),\\
\end{cases}
\end{align}
where $x_i(t)\in {\cal D}=\{0,1\}$, $f_i:{\cal D}^n\ra {\cal D}$, $i=1,2,\cdots,n$ are logical functions.
\end{dfn}

Using vector form expression: $1\sim \d_2^1=(1,0)^T$, $0\sim \d_2^2=(0,1)^T$. Then $x(t)$ can be expressed as $x(t)\in \D_2$, where $\D_k$ is the set of columns of $I_k$.

A  matrix $M\in {\cal M}_{p\times q}$ is called a logical matrix, if $\Col(M)\subset \D_p$. The set of $p\times q$ dimensional logical matrices is denoted by ${\cal L}_{p\times q}$.

Then a BN has its matrix form, called the ASSR of BN, as follows:

\begin{prp}\label{p2.2.2}
\begin{itemize}
\item[(i)] For a logical function $f:{\cal D}^n\ra {\cal D}$, there exists a unique logical matrix $M_f\in {\cal L}_{2\times 2^n}$ such that in vector form
\begin{align}\label{2.2.2}
f(x_1,x_2,\cdots,x_n)=M_f\ltimes_{i=1}^nx_i.
\end{align}
\item[(ii)] Let $M_i$ be the structure matrix of $f_i$, $i=1,2,\cdots,n$. Then in vector form (\ref{2.2.1}) can be expressed into its componentwise ASSR as
\begin{align}\label{2.2.3}
\begin{cases}
x_1(t+1)=M_1\ltimes_{i=1}^n x_i(t),\\
x_2(t+1)=M_2\ltimes_{i=1}^nx_i(t),\\
\vdots,\\
x_n(t+1)=M_n\ltimes_{i=1}^n x_i(t).\\
\end{cases}
\end{align}
\item[(iii)] Setting $x(t)=\ltimes_{i=1}^nx_i(t)$, (\ref{2.2.3}) can further be expressed into its ASSR as
\begin{align}\label{2.2.4}
x(t+1)=Mx(t),
\end{align}
where
$$
M=M_1*M_2*\cdots*M_n\in {\cal L}_{2^n\times 2^n}
$$
is called the structure matrix of BN (\ref{2.2.1}).
\end{itemize}
\end{prp}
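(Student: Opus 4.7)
The plan is to prove the three parts sequentially, exploiting the fact that each $x_i \in \D_2$ forces $\ltimes_{i=1}^{n} x_i$ to be a single standard basis vector of $\R^{2^n}$. The entire argument is essentially a basis-enumeration argument: once I verify that $(x_1,\ldots,x_n) \mapsto \ltimes_{i=1}^{n} x_i$ is a bijection from the $2^n$ vector-form inputs onto $\D_{2^n}$, existence and uniqueness of all structure matrices reduce to prescribing their columns.

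For (i), I would first invoke Proposition \ref{p2.1.3} to replace $\ltimes$ by $\otimes$ on these column factors, so that $\ltimes_{i=1}^{n} \d_{2}^{s_i} = \otimes_{i=1}^{n} \d_{2}^{s_i} = \d_{2^n}^{k}$ under the standard binary encoding $k = 1 + \sum_{i=1}^{n} (s_i - 1) 2^{n-i}$. I would then define $M_f \in {\cal L}_{2 \times 2^n}$ column by column by setting $\Col_k(M_f) := \d_{2}^{1}$ if $f$ evaluates to $1$ on the tuple corresponding to $\d_{2^n}^{k}$, and $\Col_k(M_f) := \d_{2}^{2}$ otherwise. By construction, $M_f \ltimes_{i=1}^{n} x_i$ selects the correct column, matching $f(x_1,\ldots,x_n)$ in vector form. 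Uniqueness is immediate since any other candidate must agree with $M_f$ on every basis vector $\d_{2^n}^{k}$, and those span the domain.

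Part (ii) is then an immediate application of (i) to each component $f_i$ in (\ref{2.2.1}). For part (iii), I would verify the chain of identities $x(t+1) = \ltimes_{i=1}^{n} x_i(t+1) = \ltimes_{i=1}^{n} (M_i x(t)) = (M_1 * \cdots * M_n)\, x(t)$. Since $x(t)$ always lies in $\D_{2^n}$, it suffices to check the outermost equality on basis vectors: for $x(t) = \d_{2^n}^{k}$, the middle expression becomes $\ltimes_{i=1}^{n} \Col_k(M_i)$, which by the very definition of the Khatri-Rao product equals $\Col_k(M_1 * \cdots * M_n)$. Setting $M := M_1 * \cdots * M_n$ therefore yields (\ref{2.2.4}); each column of $M$ is an iterated tensor product of elements of $\D_2$, which automatically lands in $\D_{2^n}$, so $M \in {\cal L}_{2^n \times 2^n}$.

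The main obstacle is purely bookkeeping: matching the combinatorial indexing $(s_1,\ldots,s_n) \leftrightarrow k$ in $\d_{2^n}^{k} = \otimes_{i=1}^{n} \d_{2}^{s_i}$, and consistently invoking Proposition \ref{p2.1.3} to convert between $\ltimes$ and $\otimes$ on column factors so that the iterated product assembles column-wise inside the Khatri-Rao product. No deep ideas are required; once the enumeration of $\D_{2^n}$ by $(x_1,\ldots,x_n)$ is set up, parts (i)--(iii) follow by direct column-wise verification.
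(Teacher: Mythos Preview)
Your argument is correct and is the standard column-enumeration proof of this result. Note, however, that the paper does not actually prove Proposition~\ref{p2.2.2}: it is stated without proof in the preliminaries as a known fact from \cite{che11,che12}, so there is no ``paper's own proof'' to compare against. Your write-up would serve perfectly well as a self-contained justification.
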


Similarly, the BCN is described as follows:

\begin{align}\label{2.2.5}
\begin{cases}
x_1(t+1)=f_1(x_1(t),\cdots,x_n(t);u_1(t),\cdots,u_m(t)),\\
x_2(t+1)=f_2(x_1(t),\cdots,x_n(t);u_1(t),\cdots,u_m(t)),\\
\vdots,\\
x_n(t+1)=f_n(x_1(t),\cdots,x_n(t);u_1(t),\cdots,u_m(t)),\\
\end{cases}
\end{align}
where $u_j(t)\in {\cal D}$, $j=1,\cdots,m$ are controls.

We also have similar algebraic expressions for BCN.

\begin{prp}\label{p2.2.3} Consider BCN (\ref{2.2.5}).
\begin{itemize}
\item[(i)] Its componentwise ASSR is
\begin{align}\label{2.2.6}
\begin{cases}
x_1(t+1)=L_1u(t)x(t),\\
x_2(t+1)=L_2u(t)x(t),\\
\vdots,\\
x_n(t+1)=L_nu(t)x(t),\\
\end{cases}
\end{align}
where $u(t)=\ltimes_{j=1}^mu_j(t)$, $L_i\in {\cal L}_{2\times 2^{m+n}}$ is the structure matrix of $f_i$, $i=1,\cdots,n$.

\item[(ii)]
Its ASSR is
\begin{align}\label{2.2.7}
x(t+1)=Lu(t)x(t),
\end{align}
where $L=L_1*L_2*\cdots*L_n\in {\cal L}_{2^n\times 2^{m+n}}$ is  is the structure matrix of BCN (\ref{2.2.5}).
\end{itemize}
\end{prp}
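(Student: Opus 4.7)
The proposition is a direct analogue of Proposition \ref{p2.2.2} for the controlled case, so my plan is to reduce it to that result together with the Khatri-Rao aggregation identity.

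For part (i), I would treat each $f_i$ in (\ref{2.2.5}) as a logical function of the $n+m$ variables $u_1,\dots,u_m,x_1,\dots,x_n$ (ordered so that the controls come first, matching the product $u(t)x(t)$). Applying Proposition \ref{p2.2.2}(i) to this enlarged argument list yields a unique logical matrix $L_i\in\mathcal{L}_{2\times 2^{m+n}}$ such that
\begin{align*}
f_i\bigl(x_1,\dots,x_n;u_1,\dots,u_m\bigr)=L_i\ltimes_{j=1}^m u_j\ltimes_{k=1}^n x_k=L_iu(t)x(t),
\end{align*}
where the last equality uses Proposition \ref{p2.1.3} (STP of columns equals Kronecker product) applied iteratively to collapse the $u_j$'s into $u(t)$ and the $x_k$'s into $x(t)$. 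Substituting into (\ref{2.2.5}) gives (\ref{2.2.6}).

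For part (ii), I would take the STP of the $n$ componentwise equations in (\ref{2.2.6}):
\begin{align*}
x(t+1)=\ltimes_{i=1}^n x_i(t+1)=\ltimes_{i=1}^n\bigl(L_iu(t)x(t)\bigr).
\end{align*}
The key step is the identity
\begin{align*}
\ltimes_{i=1}^n(L_iz)=(L_1*L_2*\cdots*L_n)z,\qquad z\in\Delta_{2^{m+n}},
\end{align*}
which I would prove by picking $z=\delta_{2^{m+n}}^k$: then $L_iz=\Col_k(L_i)$ for each $i$, and by the definition of the Khatri-Rao product $\Col_k(L_1*\cdots*L_n)=\Col_1(L_1)\ltimes\cdots\ltimes\Col_n(L_n)$ evaluated at index $k$, i.e. $\Col_k(L_1)\ltimes\cdots\ltimes\Col_k(L_n)$. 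Linearity extends the identity to any $z$; applying it to $z=u(t)x(t)$ (a column, hence its STP with itself collapses to the Kronecker product by Proposition \ref{p2.1.3}) gives (\ref{2.2.7}) with $L=L_1*L_2*\cdots*L_n$. The dimensions $L_i\in\mathcal{L}_{2\times 2^{m+n}}$ multiply to $L\in\mathcal{L}_{2^n\times 2^{m+n}}$, and logicality is preserved because each column of $L$ is an STP of standard basis vectors, hence itself a standard basis vector in $\Delta_{2^n}$.

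The only non-routine piece is the Khatri-Rao aggregation identity, and even this is essentially definitional once one evaluates on the basis $\Delta_{2^{m+n}}$; no genuine obstacle arises. One small care point is keeping the variable order consistent: the $L_i$ produced in (i) must be built with the $u$-block preceding the $x$-block, otherwise an extra swap matrix $W_{[2^m,2^n]}$ would intrude via Proposition \ref{p2.1.10}. Fixing this ordering convention up front eliminates the only potential source of confusion.
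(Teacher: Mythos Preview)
Your argument is correct and is exactly the standard verification of this well-known result; the paper itself does not supply a proof for Proposition~\ref{p2.2.3} (nor for Proposition~\ref{p2.2.2}), simply stating both as background facts from \cite{che11,che12}. Your reduction to Proposition~\ref{p2.2.2}(i) for part~(i) and the columnwise Khatri-Rao identity for part~(ii) is precisely how these results are established in those references, and your remark about fixing the $u$-before-$x$ ordering to avoid an extraneous swap matrix is the right caveat.
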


\begin{dfn}\label{d2.2.4} \cite{che10} Consider BN
(\ref{2.2.1}) or BCN (\ref{2.2.2}).
\begin{itemize}
\item[(i)] Their state space, denoted by ${\cal X}$, is defined as the set of all logical functions of  ~$x_1, x_2, \cdots, x_n$, denoted by
${\cal F}_{\ell}\{x_1, x_2, \cdots, x_n\}$. That is,
\begin{align} \label{2.2.8}
{\cal X}={\cal F}_{\ell}\{x_1, x_2, \cdots, x_n\}.
\end{align}

\item[(ii)] Let $z_1, z_{2}, \cdots, z_r\in {\cal X}$. Then the subspace generated by ~$z_1, z_2, \cdots, z_r$, is defined by
\begin{align} \label{2.2.9}
{\cal Z}={\cal F}_{\ell}\{z_1, z_2, \cdots, z_r\}.
\end{align}
\end{itemize}
\end{dfn}

Consider (\ref{2.2.9}). Since $z_i\in {\cal X}$, there is a structure matrix of $z_i$, denoted by $G_i$, such that in vector form we have
\begin{align} \label{2.2.10}
z_i=G_ix,\quad i=1,\cdots, r.
\end{align}
where $x=\ltimes_{i=1}^nx_i$, $G_i\in {\cal L}_{2\times 2^n}$.
Denote $z=\ltimes_{i=1}^rz_i$, then we have
\begin{align} \label{2.2.11}
z=Tx,
\end{align}
where $T=G_1*G_2*\cdots*G_r\in {\cal L}_{2^r\times 2^n}$.

\begin{dfn}\label{d2.2.5} Support ${\cal Z}={\cal F}_{\ell}\{z_1,z_2,\cdots,z_n\}$ has its algebraic expression (\ref{2.2.11}) with non-singular $T$, then ${\cal X}\ra {\cal Z}$ is called a coordinate change.
\end{dfn}

\begin{rem}\label{r2.2.6} Since $T$ is a logical matrix, if $T$ is non-singular, then it is a permutation matrix. Hence $T^{-1}=T^T$.
Now if $f\in {\cal X}$, which can be expressed via its structure matrix $M_f$ as
$$
f(x)=M_fx,
$$
then it can also be expressed as
$$
f(x)=\tilde{f}(z)=M_fT^Tz.
$$
\end{rem}

\section{Separating Subspace Approach to BN}

\subsection{Separating Function and Invariant Subspace}

Note that a logical function $f(x_1,x_2,\cdots,x_n)$ can be considered as an index function of a subset of node set $N:=\{x_1,x_2,\cdots,x_n\}$. Given an $S\subset N$, then its index function, denoted by $f_S$, can be defined as follows:
\begin{align}\label{3.1.0}
f_S(x):=\begin{cases}
1,\quad x\in S,\\
0,\quad \mbox{Otherwise}.
\end{cases}
\end{align}
Let $\pi:2^N\ra {\cal F}_{\ell}\{x_1,x_2,\cdots,x_n\}$ determined by $\pi(S)=f_s$, which is defined by (\ref{3.1.0}). Then it is obvious that  $\pi$ is bijective. Based on this observation we can define separating logical function.

For a large-scale BN, if there are $n$ nodes, its number of states is $2^n$. Say, $n=32$, then the states are $4.295E+9$. So in its ASSR, the transition matrix, which has $2^n\times 2^n$ dimension, is not  practically computable. In fact, we may not be interested in its detailed state evolution. We are only interested in some particular properties of the BN. Using the idea of separating logical function approach, the set of separating logical functions of BN is proposed as follows: Assume we are interested in a property, say $p$, for a BN. We may define a logical function $z_p$ as follows:
$$
z_p(x)=
\begin{cases}
1,\quad \mbox{p is true at }x,\\
0,\quad \mbox{p is false at }x,
\end{cases}
$$
$z_p\in {\cal X}$. Such $z_p$ is called a separating function, which classifies   all states into two groups according to property $p$.

In general, we are interested in a set of $z_i(x)$, $i=1,2,\cdots,r$, where $r<<n$. We then can aggregate $x$ into $2^r$ groups as
\begin{align} \label{3.1.1}
x^k:=\left\{x\|z(x)=\d_{2^r}^k\right\},\quad k=1,\cdots,2^r.
\end{align}

\begin{dfn}\label{d3.1.1} Let $z_i$, $i=1,\cdots,r$ be a set of separating logical functions.
Then  $\{z_i\;|\;i=1,2,\cdots,r\}$ are called aggregating variables, and
$$
{\cal Z}:={\cal F}_{\ell}\{z_1,z_2,\cdots,z_r\}
$$
is called the $\{z_i\;|\;i=1,2,\cdots,r\}$ aggregated subspace.
\end{dfn}

Support we are only interested the dynamics about ${\cal Z}$, which might be much more smaller than the original BN.

To find the dynamics of ${\cal Z}$, we need some new concepts.

\begin{dfn}\label{d3.1.2} Given BN (\ref{2.2.1}).
\begin{itemize}
\item[(i)]
${\cal Z}^1={\cal F}_{\ell}\{z^1\}={\cal F}_{\ell}\{z^1_1, z^1_2, \cdots, z^1_r\}$
 is called a regular subspace, if there exist  ~$z^2=(z^2_1, z^2_2, \cdots, z^2_{n-r})$, such that $z=(z^1_1,\cdots,z^1_r,z^2_1,\cdots, z^2_{n-r})$ is another coordinate frame. That is,
 ${\cal X}={\cal Z}={\cal F}_{\ell}\{z^1, z^2\}$.
\item[(ii)] Assume  ${\cal Z}^1$ is a regular subspace and $z=(z^1,z^2)$ is a new coordinate frame.
 Moreover, under $z$, (\ref{2.2.1}) can be expressed as
\begin{align}\label{3.1.2}
\begin{cases}
z^1(t+1)=\tilde{F}^1(z^1(t)), & z^1(t)\in {\cal Z}^1\\
z^2(t+1)=\tilde{F}^2(z(t)),  & z^2(t)\in {\cal Z}^2, z(t)\in {\cal Z},
\end{cases}
\end{align}
then ${\cal Z}^1$ is called an $M$-invariant subspace.
\end{itemize}
\end{dfn}

Recall the ASSR (\ref{2.2.4}) of (\ref{2.2.1}). We have the following result:

\begin{thm}\label{t3.1.3}
Consider BN ~(\ref{2.2.1}) with its ASSR (\ref{2.2.4}).
Suppose ${\cal Z}^1={\cal F}_{\ell}\{z^1_1, z^1_2, \cdots, z^1_r\}$ is a regular space with its ASSR as
\begin{align}\label{3.1.3}
z^1=Qx,
\end{align}
where ~$z^1=\ltimes_{i=1}^rz^1_i$, $Q\in {\cal L}_{2^r\times 2^n}$.  Then,
${\cal
Z}^1$ is an $M$ invariant subspace of ~(\ref{2.2.1}), if and only if, there exists ~$H\in {\cal L}_{2^r\times 2^r}$ such that
\begin{align}\label{3.1.4}
QM=HQ.
\end{align}
\end{thm}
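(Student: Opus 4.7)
The plan is to work entirely in coordinates. Since ${\cal Z}^1$ is regular, Definition 2.2.5 supplies a nonsingular (hence, by Remark 2.2.6, permutation) matrix $T\in {\cal L}_{2^n\times 2^n}$ with $z=Tx$ and $T^{-1}=T^\mathrm{T}$. Writing $z=z^1\ltimes z^2=z^1\otimes z^2$ and noting $\mathbf{1}_{2^{n-r}}^\mathrm{T}z^2=1$ for every $z^2\in \Delta_{2^{n-r}}$, one obtains the projection identity $Q=(I_{2^r}\otimes \mathbf{1}_{2^{n-r}}^\mathrm{T})T$, which is really the only piece of infrastructure needed. In particular $Qx$ always lies in $\Delta_{2^r}$, and as $x$ ranges over $\Delta_{2^n}$ the value $Qx$ hits every element of $\Delta_{2^r}$.

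For necessity, assume ${\cal Z}^1$ is $M$-invariant. By Definition 3.1.2(ii) the first block of the transformed dynamics satisfies $z^1(t+1)=\tilde F^1(z^1(t))$, i.e.\ $z^1(t+1)$ is a logical function of $z^1(t)$ alone. Applying Proposition 2.2.2(i) to $\tilde F^1$ yields a unique structure matrix $H\in {\cal L}_{2^r\times 2^r}$ with $z^1(t+1)=Hz^1(t)$. Now substitute the two ASSR expressions $z^1(t+1)=Qx(t+1)=QMx(t)$ and $z^1(t)=Qx(t)$ to get $QMx=HQx$ for every $x\in \Delta_{2^n}$. Since the elements of $\Delta_{2^n}$ are exactly the columns of $I_{2^n}$, this is the matrix identity $QM=HQ$.

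For sufficiency, start from $QM=HQ$ with logical $H$, and compute directly $z^1(t+1)=Qx(t+1)=QMx(t)=HQx(t)=Hz^1(t)$. By Proposition 2.2.2(i) applied in reverse, $H$ is the structure matrix of some logical function $\tilde F^1(z^1)$, so $z^1$ evolves autonomously. To match the form (3.1.2) exactly, extend $z^1$ to a full coordinate frame $z=(z^1,z^2)$ by regularity, and let $\tilde F^2$ be whatever function the second block $z^2(t+1)$ turns out to be of $z(t)$ via the transformed dynamics $z(t+1)=TMT^\mathrm{T}z(t)$; no structural constraint is imposed on that block.

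The proof is essentially a one-line matrix chase in each direction, and I do not expect a genuine obstacle. The only point requiring mild care is the necessity step, where one must recognize that the logical function $\tilde F^1$ furnished by the invariance hypothesis has a $2^r\times 2^r$ logical structure matrix (so that $H$ really lives in ${\cal L}_{2^r\times 2^r}$), which is immediate from Proposition 2.2.2. Everything else is bookkeeping with the identities $T^\mathrm{T}T=I$ and $Q=(I_{2^r}\otimes \mathbf{1}_{2^{n-r}}^\mathrm{T})T$ established in the first paragraph.
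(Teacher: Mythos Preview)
Your proof is correct and follows essentially the same route as the paper: in each direction you run the chain $z^1(t+1)=Qx(t+1)=QMx(t)$ and match it against $Hz^1(t)=HQx(t)$, invoking the structure matrix of $\tilde F^1$ for necessity and the regularity extension to $(z^1,z^2)$ for sufficiency. The projection identity $Q=(I_{2^r}\otimes \mathbf{1}_{2^{n-r}}^{\mathrm T})T$ you set up in the first paragraph is correct but never actually used in the argument, so you could drop it without loss.
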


\begin{proof} (sufficiency) Since ~${\cal Z}^1$ is a regular subspace, there exists ~$z^2=(z^2_{1},z^2_{2},\cdots, z^2_{n-r})$, such that ~$z=(z^1,z^2)$ is a new coordinate frame. Hence,
\begin{align}\label{3.1.5}
z^1(t+1)=Qx(t+1)=QMx(t)=HQx(t)=Hz^1(t).
\end{align}
(\ref{3.1.5}) shows that under coordinates ~$z$ the BN has the form of ~(\ref{3.1.2}).

(necessity) Assume under coordinate frame $z$  NB (\ref{2.2.1}) has the form of (\ref{3.1.2}).
Moreover, assume the structure matrix of ~$\tilde{F}_1$ is ~$\tilde{M}_1\in {\cal L}_{2^k\times 2^k}$. Then
$$
z^1(t+1)=\tilde{M}_1z^1(t)=\tilde{M}_1Qx(t).
$$
On the other hand,
$$
z^1(t+1)=Qx(t+1)=QMx(t).
$$
Since ~$x(t)$ is arbitrary, we have
$$
QM=\tilde{M}_1Q.
$$
Set ~$H=\tilde{M}_1$,  ~(\ref{3.1.5}) follows.
\end{proof}

\begin{rem}\label{r3.1.4}
According to ~\ref{t3.1.3}, to verify whether a regular subspace is an invariant subspace we have to check whether equation (\ref{3.1.5}) has solution ~$H$. Since ~${\cal Z}^1$ is a regular subspace, its structure matrix ~$Q$ should have full row rank. Hence, if $H$ is the solution, then $H=H^*$, where
\begin{align}\label{3.1.6}
H^*:=QMQ^{\mathrm{T}}(QQ^{\mathrm{T}})^{-1}.
\end{align}
Hence the see whether (\ref{3.1.5}) has solution ~$H$ we have only to verify if $H^*$ is logical matrix and it satisfies (\ref{3.1.5}).
\end{rem}

We give an example.

\begin{exa}\label{e3.1.5}
Consider the following BN:
\begin{align}\label{3.1.7}
\begin{cases}
x_{1}(t+1)=(x_{1}(t)\wedge x_{2}(t)\wedge \neg x_{4}(t))\vee(\neg x_{1}(t)\wedge x_{2}(t))\\
x_{2}(t+1)=x_{2}(t)\vee(x_{3}(t)\lra x_{4}(t))\\
x_{3}(t+1)=(x_{1}(t)\wedge\neg x_{4}(t))\vee(\neg x_{1}(t)\wedge x_{2}(t))\vee(\neg x_{1}(t)\wedge \neg x_{2}(t)\wedge x_{4}(t))\\
x_{4}(t+1)=x_{1}(t)\wedge\neg x_{2}(t)\wedge x_{4}(t).
\end{cases}
\end{align}
Its ASSR is calculated as
\begin{align}\label{3.1.8}
x(t+1)=Mx(t),
\end{align}
where
$$
M=\delta_{16}[11, 1, 11, 1, 11, 13, 15, 9, 1, 2, 1, 2, 9, 15, 13, 11].
$$
Suppose ${\cal Z}={\cal F}_{\ell}\{z_{1}, z_{2}, z_{3}\}$, where
\begin{align}\label{3.1.9}
\begin{cases}
z_{1}=x_{1}\bar{\vee}x_{4}\\
z_{2}=\neg x_2\\
z_{3}=x_{3}\lra\neg x_{4}.
\end{cases}
\end{align}

Denote ~$x=\ltimes_{i=1}^{4}x_{i}$, $z=\ltimes_{i=1}^{3}z_{i}$, then
$$
z=Qx,
$$
where $Q$ can be calculated as
$$
Q=\d_{8}[8, 3, 7, 4, 6, 1, 5, 2, 4, 7, 3, 8, 2, 5, 1, 6].
$$
Using ~(\ref{3.1.6}), we have
$$
H^*=\d_{8}[2,4,8,8,1,3,3,3].
$$
It is ready to verify ~(\ref{3.1.5}). Hence ~${\cal Z}$ is an invariant subspace of ~(\ref{3.1.7}).
\end{exa}

\subsection{Union of Invariant Subspaces}

Assume ${\cal V}_i$, $i=1,2$ are two $M$ invariant subspaces, where
\begin{align}\label{3.3.1}
\begin{array}{l}
{\cal V}_1={\cal F}_{\ell}\{z^1_1,\cdots,z^1_p\},\\
{\cal V}_2={\cal F}_{\ell}\{z^2_1,\cdots,z^2_q\}.\\
\end{array}
\end{align}
  Then we have
\begin{align}\label{3.3.2}
{\cal V}_i=G_ix,\quad i=1,2,
\end{align}
where $x=\ltimes_{i=1}^nx_i$, $G_1\in {\cal L}_{2^p\times 2^n}$, $G_2\in {\cal L}_{2^q\times 2^n}$.

\begin{thm}\label{t3.3.1} Assume ${\cal V}_i$, $i=1,2$ are $M$ invariant subspaces. That is, there exist $H_1\in {\cal L}_{p\times p}$ and
$H_2\in {\cal L}_{q\times q}$, such that
\begin{align}\label{3.3.3}
\begin{array}{l}
G_1M=H_1G_1,\\
G_2M=H_2G_2.
\end{array}
\end{align}
Then
$${\cal V}={\cal V}_1\bigcup {\cal V}_2={\cal F}_{\ell}\{z^1_1,\cdots,z^1_p;z^2_1,\cdots,z^2_q\}
$$
is also $M$-invariant. Moreover, the structure matrix of ${\cal V}$, denoted by
\begin{align}\label{3.3.4}
G=G_1*G_2,
\end{align}
satisfies
\begin{align}\label{3.3.5}
GM=HG,
\end{align}
where
\begin{align}\label{3.3.6}
H=H_1\otimes H_2.
\end{align}
\end{thm}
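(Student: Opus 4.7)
The plan is to verify equation (\ref{3.3.5}) directly as a matrix identity by evaluating both sides on an arbitrary basis vector $x=\d_{2^n}^i$ and then invoking the mixed-product rule for Kronecker products. Since $G$ and $HG$ are both $2^{p+q}\times 2^n$ matrices, checking agreement on every $\d_{2^n}^i$ is enough.

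First I would unpack the Khatri-Rao product $G=G_1*G_2$. By its definition, $\Col_i(G)=\Col_i(G_1)\ltimes\Col_i(G_2)$, so for $x=\d_{2^n}^i$ we have $Gx=(G_1x)\ltimes(G_2x)$. Because $G_1x$ and $G_2x$ are column vectors, Proposition~\ref{p2.1.3} turns the semi-tensor product into the Kronecker product: $Gx=(G_1x)\ot(G_2x)$. This is the single algebraic observation that drives the whole argument.

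Applying the same identity to $Mx$ and using the invariance hypotheses (\ref{3.3.3}), I obtain
\begin{align*}
G(Mx)=(G_1Mx)\ot(G_2Mx)=(H_1G_1x)\ot(H_2G_2x).
\end{align*}
The mixed-product rule $(A\ot B)(C\ot D)=(AC)\ot(BD)$ then yields
\begin{align*}
(H_1G_1x)\ot(H_2G_2x)=(H_1\ot H_2)\bigl((G_1x)\ot(G_2x)\bigr)=(H_1\ot H_2)Gx=HGx.
\end{align*}
Hence $GMx=HGx$ on every column of $I_{2^n}$, so $GM=HG$, which is exactly (\ref{3.3.5}).

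To finish, I would note that $H=H_1\ot H_2$ is the Kronecker product of two logical matrices, hence is itself logical, of the correct size $2^{p+q}\times 2^{p+q}$ to serve as the transition matrix on $\V$; and that $G=G_1*G_2$ is indeed the ASSR structure matrix of the stacked aggregating vector $z=z^1\ltimes z^2$ whose components generate $\V_1\cup\V_2$. There is no real obstacle here: once the Khatri-Rao column identity is recognized as a Kronecker product of columns, the theorem collapses to the mixed-product rule. The only subtle point worth flagging is that invariance in the sense of Definition~\ref{d3.1.2} strictly requires $\V$ to be a regular subspace (so that $z$ completes to a coordinate frame), and regularity of $\V_1\cup\V_2$ is not automatic from regularity of $\V_1$ and $\V_2$ when the generators are redundant; the algebraic identity (\ref{3.3.5}), however, holds independently of that issue and is the content of the theorem as stated.
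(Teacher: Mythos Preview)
Your argument is correct and is essentially the paper's own proof. The paper first isolates the identity $(A*B)T=(AT)*(BT)$ for logical $T$ as a separate lemma (Lemma~\ref{l3.3.2}), applies it with $T=M$ to get $GM=(H_1G_1)*(H_2G_2)$, and then expands column by column using the mixed-product rule --- exactly the computation you carry out directly on basis vectors $x=\d_{2^n}^i$. Your remark about regularity of $\V$ not following automatically is a valid caveat that the paper also acknowledges (in a later remark) by noting that the algebraic identity (\ref{3.3.5}) is what matters.
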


To prove this theorem,  we need the following lemma, which itself is useful.

\begin{lem}\label{l3.3.2} Let $A\in {\cal M}_{p\times \ell}$,  $B\in {\cal M}_{q\times \ell}$, and $T\in {\cal L}_{\ell\times r}$.
Then
\begin{align}\label{3.3.7}
(A*B)T=(AT)*(BT).
\end{align}
\end{lem}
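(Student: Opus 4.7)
The plan is to verify the identity column-by-column, exploiting the special structure of logical matrices. Since the Khatri-Rao product is defined columnwise by $\Col_i(A*B)=\Col_i(A)\ltimes \Col_i(B)$, both sides of (\ref{3.3.7}) are matrices with $r$ columns, so it suffices to check that their $k$-th columns agree for every $k\in\{1,\dots,r\}$.

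First I would use the hypothesis $T\in \mathcal{L}_{\ell\times r}$: each column of $T$ is a standard basis vector, so there exist indices $j_1,\dots,j_r\in\{1,\dots,\ell\}$ with $\Col_k(T)=\d_{\ell}^{j_k}$. For any matrix $X$ having $\ell$ columns, right-multiplication by $\d_{\ell}^{j_k}$ selects the $j_k$-th column, i.e.\ $\Col_k(XT)=X\d_{\ell}^{j_k}=\Col_{j_k}(X)$. Applying this with $X=A*B$, with $X=A$, and with $X=B$ reduces the problem to a purely columnwise statement.

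Next I would compute both sides. On the left,
\begin{align*}
\Col_k\bigl((A*B)T\bigr)=\Col_{j_k}(A*B)=\Col_{j_k}(A)\ltimes \Col_{j_k}(B).
\end{align*}
On the right,
\begin{align*}
\Col_k\bigl((AT)*(BT)\bigr)=\Col_k(AT)\ltimes \Col_k(BT)=\Col_{j_k}(A)\ltimes \Col_{j_k}(B).
\end{align*}
The two expressions are identical, so the columns agree for every $k$, which yields (\ref{3.3.7}).

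There is essentially no obstacle here; the only point that needs care is that the identity genuinely requires $T$ to be logical (or at least a $0/1$ selection matrix). For a general real $T$, the left-hand side would involve linear combinations of Khatri-Rao columns, while the right-hand side would mix products of linear combinations, and the two need not coincide. So the cleanest presentation will emphasize that $T$'s columns are standard basis vectors and then conclude by a one-line column-selection computation as above.
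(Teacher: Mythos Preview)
Your proof is correct and follows essentially the same approach as the paper: write $\Col_k(T)=\d_{\ell}^{j_k}$, observe that right-multiplication by $T$ selects columns, and then compare the $k$-th columns of both sides using the columnwise definition of the Khatri--Rao product. The paper presents the same computation by listing all columns of $(A*B)T$ and $(AT)*(BT)$ at once rather than fixing a single index $k$, but the argument is identical.
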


\begin{proof}
Denote
$$
A=[A^1,A^2,\cdots,A^{\ell}],\quad B=[B^1,B^2,\cdots,B^{\ell}],
$$
where $A^i=\Col_i(A)$ ($B^i=\Col_i(B)$) is the $i$-th column of $A$ ($B$); and
$$
T=\left[\d_{\ell}^{i_1},\d_{\ell}^{i_2},\cdots,\d_{\ell}^{i_m}\right].
$$
Then
$$
\begin{array}{ccl}
(A*B)T&=&\left([A^1,A^2,\cdots,A^{\ell}]*[B^1,B^2,\cdots,B^{\ell}]\right)T\\
~&=&\left[A^1\otimes B^1,A^2\otimes B^2,\cdots,A^{\ell}\otimes B^{\ell}\right]T\\
~&=&\left[A^{i_1}\otimes B^{i_1},A^{i_2}\otimes B^{i_2},\cdots,A^{i_m}\otimes B^{i_m}\right].
\end{array}
$$
$$
\begin{array}{ccl}
(AT)*(BT)&=&\left[A^{i_1},A^{i_2},\cdots,A^{i_m}\right]*\left[B^{i_1},B^{i_2},\cdots,B^{i_m}\right]\\
~&=&\left[A^{i_1}\otimes B^{i_1},A^{i_2}\otimes B^{i_2},\cdots,A^{i_m}\otimes B^{i_m}\right].
\end{array}
$$
(\ref{3.3.7}) follows immediately.
\end{proof}

\begin{proof} (of Theorem \ref{t3.3.1}) It is enough to prove (\ref{3.3.5}) with (\ref{3.3.6}).
Denote $G_1=(G_1^1,\cdots,G_1^{2^n})$, $G_2=(G_2^1,\cdots,G_2^{2^n})$, where $G_1^i=\Col_i(G_1)$,   $G_2^i=\Col_i(G_2)$, $i=1,2,\cdots, 2^n$. Using Lemma \ref{l3.3.2},
$$
\begin{array}{ccl}
GT&=&(G_1*G_2)T=(G_1T)*(G_2T)\\
~&=&(H_1G_1)*(H_2G_2)\\
~&=&\left[(H_1G_1^1)*(H_2G_2^1), (H_1G_1^2)*(H_2G_2^2),\cdots, (H_1G_1^{2^n})*(H_2G_2^{2^n})\right]\\
~&=&\left[(H_1G_1^1)\otimes(H_2G_2^1), (H_1G_1^2)\otimes (H_2G_2^2),\cdots, (H_1G_1^{2^n})\otimes(H_2G_2^{2^n})\right]\\
~&=&\left[(H_1\otimes H_2)(G_1^1\otimes G_2^1), (H_1\otimes H_2)(G_1^2\otimes G_2^2),\cdots, (H_1\otimes H_2)(G_1^{2^n}\otimes G_2^{2^n})\right]\\
~&=&(H_1\otimes H_2)(G_1*G_2)=(H_1\otimes H_2)G.
\end{array}
$$
\end{proof}

\subsection{Dynamics of Aggregated NB}

Assume (\ref{2.2.1}) is a large scale BN, and  $z_i$, $i=1,\cdots,r$ are separating logical functions, which represent our interested properties. Denote by
$$
{\cal Z}={\cal F}_{\ell}\{z_i\;|\; i=1,\cdots,r\}
$$
We first try to find the smallest subspace $\overline{{\cal Z}}$, which contains ${\cal Z}$ and is $M$-invariant.

\begin{alg}\label{a3.2.1}
\begin{itemize}
\item Step 1: Set $z^0=\ltimes_{i=1}^rz_i$, and assume
$$
z^0=G_0x.
$$
Calculate
$$
z^1=\{G_0x\cup G_0Mx\}:=G_1x.
$$
\item Step k: Assume $z^{k-1}=G_{k-1}x$ is known. Then
$$
z^k=\{G_{k-1}x\cup G_{k-1}Mx\}:=G_kx.
$$
\item Final Step: Assume $z^{k^*}=z^{k^*+1}$, then
\begin{align}\label{3.2.1}
\overline{{\cal Z}}:={\cal F}_{\ell}\{z^{k^*}\}.
\end{align}
\end{itemize}
\end{alg}

\begin{rem}\label{r3.2.101}
In Algorithm \ref{a3.2.1} at each step we assume in $z^i$ all the repeated functions have been deleted. Otherwise, $G_i$ maybe unnecessarily large.
\end{rem}

By construction it is clear that the $\overline{{\cal Z}}$ provided by (\ref{3.2.1}) is the smallest subspace,  containing ${\cal Z}$ and is $M$-invariant.

\begin{dfn}\label{d3.2.2} The dynamics of $\overline{{\cal Z}}$ is called the $\{z_i\;|\; i=1,\cdots,r\}$ aggregated BN.
\end{dfn}

Next, we try to find the dynamics of aggregated BN.

Assume $\overline{{\cal Z}}={\cal F}_{\ell}(\bar{z})$ is a regular subspace, then
$$
\bar{z}=\bar{G}x.
$$
Using Theorem \ref{t3.1.3}, we have that
\begin{align}\label{3.2.2}
\begin{array}{ccl}
\bar{z}(t+1)&=&\bar{G}x(t+1)=\bar{G}Mx(t)\\
~&=&H\bar{G}x(t)=H\bar{z}(t).
\end{array}
\end{align}

Summarizing the above arguments, we have the following result.

\begin{thm}\label{t3.2.3} (\ref{3.2.2}) is  the dynamics of aggregated BN.
\end{thm}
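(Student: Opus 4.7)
My plan is to derive (\ref{3.2.2}) as a direct consequence of Algorithm \ref{a3.2.1} combined with Theorem \ref{t3.1.3}. The approach decomposes into three stages.

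First I would verify that Algorithm \ref{a3.2.1} terminates and produces an $M$-invariant subspace. Since the ambient ${\cal F}_{\ell}\{x_1,\ldots,x_n\}$ contains only finitely many distinct logical functions (at most $2^{2^n}$), the chain $z^0\subseteq z^1\subseteq \cdots$ sits in a finite lattice and must stabilize at some step $k^*$. Per Remark \ref{r3.2.101}, duplicates are discarded at each step, so the equality $z^{k^*}=z^{k^*+1}$ says precisely that every row of $G_{k^*}M$ already occurs as a row of $G_{k^*}$, i.e.\ the aggregating set is closed under $M$-pullback. A short induction then shows that any $M$-invariant subspace containing ${\cal Z}$ must contain every $G_{k-1}Mx$ introduced along the way, so $\overline{\cal Z}$ is indeed the smallest such subspace.

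Second, I would invoke Theorem \ref{t3.1.3} to extract the coefficient matrix $H$. Under the regularity assumption stipulated in the paragraph preceding (\ref{3.2.2}), $\bar{G}$ has full row rank, so by Remark \ref{r3.1.4} the candidate $H=\bar{G}M\bar{G}^{\mathrm{T}}(\bar{G}\bar{G}^{\mathrm{T}})^{-1}$ is well-defined; the closure property from the first stage guarantees it is a logical matrix satisfying $\bar{G}M=H\bar{G}$. Theorem \ref{t3.1.3} then certifies that $\overline{\cal Z}$ is $M$-invariant with transition matrix $H$.

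Third, (\ref{3.2.2}) itself follows in one line of substitution: starting from $\bar{z}(t)=\bar{G}x(t)$, using $x(t+1)=Mx(t)$ and $\bar{G}M=H\bar{G}$ yields $\bar{z}(t+1)=H\bar{z}(t)$. The only point requiring genuine thought is the closure observation at stabilization: one must check that when no new rows are generated at step $k^*$, every $M$-iterate of an aggregating function lies in the aggregating set, so that $H$ really is a logical (rather than merely real) matrix. Everything else is a routine consequence of previously established results in Section 3.1, and the theorem is essentially a packaging of Algorithm \ref{a3.2.1} with Theorem \ref{t3.1.3}.
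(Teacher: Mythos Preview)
Your proposal is correct and follows essentially the same route as the paper: the paper's ``proof'' is simply the discussion preceding the theorem statement (Algorithm \ref{a3.2.1} producing $\overline{\cal Z}$, the assertion that it is the smallest $M$-invariant subspace containing ${\cal Z}$, and the one-line derivation of (\ref{3.2.2}) via Theorem \ref{t3.1.3}). You add more detail than the paper does---in particular the finiteness/termination argument and the explicit check that stabilization at step $k^*$ yields closure under $M$-pullback so that $H$ is genuinely logical---whereas the paper simply declares ``by construction it is clear'' and moves on; but the underlying structure is identical.
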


\begin{rem}\label{r3.2.4} It is obvious that in Theorem \ref{t3.2.3} the regularity of  $\overline{{\cal Z}}$ has been ignored. From Algorithm \ref{a3.2.1} one sees easily that (\ref{3.1.4})  is enough for (\ref{3.2.2}). In fact, we do not care about if $\overline{{\cal Z}}$ is regular or not. When it is not, we can not get the second part of equation (\ref{3.1.2}), which is not interesting to us.
\end{rem}

In the following an example is given to describe the technique for constructing aggregated BN.
.

\begin{exa}\label{e3.2.5} An opinion dynamic network is depicted in Fig. \ref{Fig.3.1}, where $x_i$, $i=1,2,\cdots,9$ are players. Each player chooses his next opinion  $1$ (with whit circle) for ``agree" and $0$ (with block circle) for ``disagree" based on its neighborhood information.
The boundary players $A,B,C,D,E,F$ have invariant opinion $1$, and $U,V,W,X,Y,Z$ have invariant opinion $0$.

Each player always follows the majority. Counting himself, a player  has $5$ neighbors. So the decision is unique. Note that they might have boundary neighbors, who have fixed attitude.

\vskip 5mm

\begin{figure}
\centering
\setlength{\unitlength}{8mm}
\begin{picture}(11,11)\thicklines
\put(1,3){\circle*{0.2}}
\put(1,5){\circle*{0.2}}
\put(1,7){\circle*{0.2}}
\put(9,3){\circle*{0.2}}
\put(9,5){\circle*{0.2}}
\put(9,7){\circle*{0.2}}
\put(3,1){\circle{0.2}}
\put(5,1){\circle{0.2}}
\put(7,1){\circle{0.2}}
\put(3,9){\circle{0.2}}
\put(5,9){\circle{0.2}}
\put(7,9){\circle{0.2}}
\put(3,3){\circle{0.2}}
\put(5,3){\circle{0.2}}
\put(7,3){\circle{0.2}}
\put(3,5){\circle{0.2}}
\put(5,5){\circle{0.2}}
\put(7,5){\circle{0.2}}
\put(3,7){\circle{0.2}}
\put(5,7){\circle{0.2}}
\put(7,7){\circle{0.2}}
\put(1,3){\line(1,0){8}}
\put(1,5){\line(1,0){8}}
\put(1,7){\line(1,0){8}}
\put(3,1.1){\line(0,1){7.8}}
\put(5,1.1){\line(0,1){7.8}}
\put(7,1.1){\line(0,1){7.8}}
\put(3.2,9.2){$A$}
\put(5.2,9.2){$B$}
\put(7.2,9.2){$C$}
\put(3.2,1.2){$D$}
\put(5.2,1.2){$E$}
\put(7.2,1.2){$F$}
\put(1.2,7.2){$U$}
\put(1.2,5.2){$V$}
\put(1.4,4.5){$u(t)$}
\put(1.2,3.2){$W$}
\put(9.2,7.2){$X$}
\put(9.2,5.2){$Y$}
\put(9.2,3.2){$Z$}
\put(1,5){\vector(1,0){1.8}}
\put(3.2,7.2){$x_1$}
\put(5.2,7.2){$x_2$}
\put(7.2,7.2){$x_3$}
\put(3.2,5.2){$x_4$}
\put(5.2,5.2){$x_5$}
\put(7.2,5.2){$x_6$}
\put(3.2,3.2){$x_7$}
\put(5.2,3.2){$x_8$}
\put(7.2,3.2){$x_9$}
\end{picture}
\caption{Social Network}\label{Fig.3.1}
\end{figure}
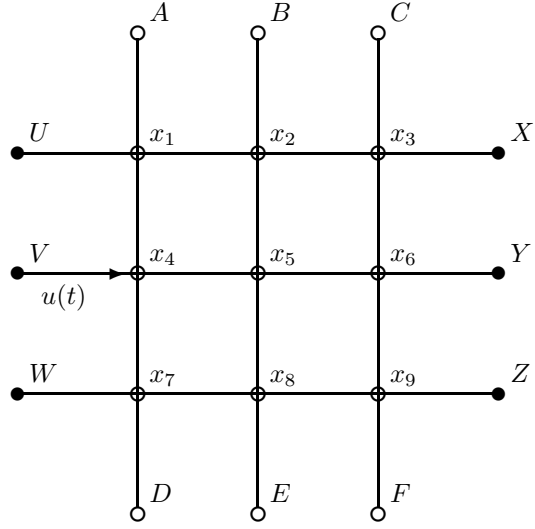

\vskip 5mm

Using ASSR, we have
\begin{align}\label{3.2.201}
x(t+1)=Mx(t),
\end{align}
where $x=\ltimes_{i=1}^9x_i$ and $M\in {\cal M}_{256\times 256}$ is in Appendix.

Now assume we are particularly interested in three situations:
$S:=\{x^1,x^2,x^3\}$, where
$$
\begin{array}{l}
x^1=\d_{512}^{43}\sim\{1,1,1,0,1,0,1,0,0\},\\
x^2=\d_{512}^{143}\sim\{1,0,1,1,0,1,1,1,1\},\\
x^3=\d_{512}^{165}\sim\{1,0,1,0,1,1,0,1,1\}.
\end{array}
$$
Then the  index function for $S$ is defined as
$$
g_1(x)=
\begin{cases}
1,\quad x\in S,\\
0,\quad \mbox{Otherwise}.
\end{cases}
$$
Correspondingly, we have its structure matrix
$$
\Col_i(G_1)=
\begin{cases}
\d_2^1,\quad \d_{512}^i\in S,\\
\d_2^2,\quad \mbox{Otherwise}.
\end{cases}
$$

Then $G_2=G_1M$ can be expressed as
$$
\Col_i(G_2)=
\begin{cases}
\d_2^1,\quad i=22,89,150,278,\\
\d_2^2,\quad \mbox{Otherwise}.
\end{cases}
$$
Furthermore,
$$
G_2M=G_1.
$$
Set $z=z_1z_2$, where
$$
z_1=G_1x,\quad z_2=G_3x,
$$
with $x=\ltimes_{i=1}^{9}x_i$.
It follows that
$$
\begin{array}{ccl}
z_1(t+1)&=&G_1x(t+1)\\
~&=&G_1Mx(t)=G_2x(t)\\
~&=&z_2(t).
\end{array}
$$
$$
\begin{array}{ccl}
z_2(t+1)&=&G_2x(t+1)\\
~&=&G_2Mx(t)=G_1x(t)\\
~&=&z_1(t).
\end{array}
$$

Hence the smallest $M$ invariant subspace containing $g_1$ is
$$
G={\cal F}_{\ell}\{g_1,g_2\}.
$$
The aggregated system becomes
\begin{align}\label{3.2.3}
\begin{array}{l}
z_1(t+1)=z_2(t)=(\J_2^T\otimes I_2)z(t)\\
z_2(t+1)=z_1(t)=(I_2\otimes \J_2^T)z(t),
\end{array}
\end{align}
where $z(t)=z_1(t)z_2(t)$.
Hence, the ASSR of $z(t)$ is
\begin{align}\label{3.2.4}
z(t+1)=\left[(\J_2^T\otimes I_2)*(I_2\otimes \J_2^T)\right]z(t)=\d_4[1,3,2,4]z(t).
\end{align}
The aggregated BN (\ref{3.2.4}) is much smaller that the original BN (\ref{3.2.2}), but it is enough to describe the dynamics of the state $z^*=g_1(x)$, which is concerned by us.

\end{exa}

\begin{rem}\label{r3.2.6} From Example \ref{e3.2.5} one sees easily that as the related attractor of a BN is of small size, then the aggregated BN might reduce the size of the original BN tremendously. Now one may ask if the related attractor is of large size, then what can we do? Of course, if the $M$ invariant subspace containing the separating logical functions, which represent the properties interesting to us, involves large size attractors, then the aggregated BN may still have large scale. Fortunately, as pointed by Kauffman \cite{kau95}:  The ``vast order" of a large scale cellular network is decided by ``tiny attractors". This fact makes the aggregation technique more useful.
\end{rem}

\section{Invariant Subspace of BCN}

Consider BCN (\ref{2.2.5}) with its ASSR (\ref{2.2.7}). Splitting $L$ into $2^m$ blocks as
\begin{align}\label{4.1}
L=[M_1,M_2,\cdots,M_{2^m}],
\end{align}
where
$$
M_r=L\d_{2^m}^r\in {\cal L}_{2^n\times 2^n},\quad r=1,2,\cdots,2^m.
$$

\begin{dfn}\label{d4.1}
\begin{itemize}
\item[(i)] ${\cal Z}$ is said to be $L$ invariant, if ${\cal Z}$ is $M_i$ invariant for all $i=1,2,\cdots,2^m$.

\item[(ii)] ${\cal Z}$ is said to be partly $L$ invariant with respect to $U\subset\d_{2^m}\{1,2,\cdots,2^m\}$, if ${\cal Z}$ is $M_i$ invariant for all $u\in U$.
\end{itemize}
\end{dfn}

\begin{dfn}\label{d4.2}
\begin{itemize}
\item[(i)] ${\cal V}$ is called a control invariant subspace containing ${\cal Z}$, if it contains ${\cal Z}$, and  for any control $u$ it is
$Lu$ invariant.
\item[(ii)] The intersection of all control invariant subspaces containing ${\cal Z}$ is called the smallest control invariant subspace containing ${\cal Z}$, and denoted by $\overline{\cal Z}$.
\item[(iii)]  ${\cal V}$ is called a partly control invariant subspace containing ${\cal Z}$ with respect to $U$, if it contains ${\cal Z}$, and  for any control $u\in U$ it is
$Lu$ invariant.
\item[(ii)] The intersection of all partly control invariant subspaces containing ${\cal Z}$ with respect to $U$ is called the smallest partly control invariant subspace containing ${\cal Z}$ with respect to $U$, and denoted by $\overline{\cal Z}^U$.

\end{itemize}
\end{dfn}

Assume ${\cal Z}={\cal F}_{\ell}\{z_1,z_2,\cdots,z_r\}$ and $\overline{\cal Z}={\cal F}_{\ell}\{z_1,\cdots,z_r,z_{r+1},\cdots,z_s\}$.
Denote  $z=\ltimes_{i=1}^sz_i$, then there exists a $G\in {\cal L}_{2^s\times 2^n}$, such that
$$
z=G\ltimes_{i=1}^nx_i:=Gx.
$$
Since $\overline{\cal Z}$ is control invariant subspaces, for $u=\d_{2^m}^i$ we have
\begin{align}\label{4.2}
GM_i=H_iG,\quad i=1,2,\cdots,2^m.
\end{align}

It follows that
$$
\begin{array}{ccl}
z(t+1)&=&Gx(t+1)=GLu(t)x(t)\\
~&=&[H_1,H_2,\cdots,H_{2^m}]Gx(t)\\
~&=&[H_1,H_2,\cdots,H_{2^m}]u(t)z(t)\\
\end{array}
$$
Define $H:=[H_1,H_2,\cdots,H_{2^m}]$, then we have the aggregated BCN as
\begin{align}\label{4.3}
z(t+1)=Hu(t)z(t).
\end{align}

Next, we consider the case when there is a constrain on control, as $u(t)\in U\subset \D_{2^m}$.
Assume $U$ is state-depending. That is,
$$
U=\left\{u\neq \d_{2^m}^\a ~\mbox{if}~ z\in X_{\a}\subset {\cal X}=\D_{2^n}\;|\;\a\in \Xi\subset \D_{2^m}\right\}.
$$

We need the following notation: $A\in {\cal M}_{p\times q}$ is called a zero-extended logical matrix if
$$
\Col(A)\subset \D_p\cup \0_p.
$$
That is $A$ may contain some zero columns.

Now consider partly control invariant subspaces containing ${\cal Z}$. Assume when $z=\d_{2^s}^k$, $u=\d_{2^m}^{\a}$ is forbidden. Then in equation (\ref{4.3}) we set
$$
\Col_k(H_{\a})=\0_{2^s}.
$$
Finally, we can construct the modified $H$, denoted by $H^U$, to describe the partly control invariant aggregated BCN, which has its dynamic equation as
\begin{align}\label{4.4}
z(t+1)=H^Uu(t)z(t).
\end{align}

We use an example to depict it.

\begin{exa}\label{e4.3} Recall Example \ref{e3.2.5}. Assume the boundary player $V$ is replaced by a control $u(t)$, ( refer to Fig. \ref{Fig.3.1}).

Then it is a normal routine to figure out the dynamics of this BCN as
\begin{align}\label{4.5}
x(t+1)=[N,M]u(t)x(t),
\end{align}
where $M$ is the same as in Example \ref{e3.2.5}, $N$ is also in Appendix.

Assume we are still particularly interested in  the $S$ as in Example \ref{e3.2.5}, i.e.,
$S:=\{x^1,x^2,x^3\}$,
where $x^1=\d_{512}^{43}$, $
x^2=\d_{512}^{143}\sim\{1,0,1,1,0,1,1,1,1\}$,
$x^3=\d_{512}^{165}$.

Then it is easy to calculate that
$G_1N=G_3$, $G_3N=G_4$, $G_4N=G5$, $G_5N=G_7$; $G_2N=G_6$, $G_6N=G_5$; $G_7M=G_7$, $G_7N=G_7$,
where
$$
\Col_i(G_3)=
\begin{cases}
\d_2^1,\quad i=43,47,143,164,229,420,\\
\d_2^2,\quad \mbox{Otherwise}.
\end{cases}
$$
$$
\Col_i(G_4)=
\begin{cases}
\d_2^1,\quad i=59,118,278,\\
\d_2^2,\quad \mbox{Otherwise}.
\end{cases}
$$
$$
\Col_i(G_5)=
\begin{cases}
\d_2^1,\quad i=164,299,420,\\
\d_2^2,\quad \mbox{Otherwise}.
\end{cases}
$$
$$
\Col_i(G_6)=
\begin{cases}
\d_2^1,\quad i=278,\\
\d_2^2,\quad \mbox{Otherwise}.
\end{cases}
$$

$$
\Col_i(G_7)=\d_2^2,\quad i=1,2,\cdots,512.
$$

Set
$$
z_i=G_ix,\quad i=1,2,\cdots,7,
$$
$W=\{3,4,5,6\}$
and we assume the feasible control set
$$
U=\{u(t)\neq \d_2^2\;|\;x(t)\in W\}
$$

Finally, the partly control invariant aggregation BCN, is obtained as follows.
\begin{align}\label{4.6}
z(t+1)=H^Uu(t)z(t),
\end{align}
where
$z(t)=(z_1(t),z_2(t),z_3(t),z_4(t),z_5(t),z_6(t),z_7(t))^T$, and
$$
H^U=\d_7[6,3,4,5,7,5,7,2,1,0,0,0,0,7].
$$

The state-transition graph is depicted in Fig. \ref{Fig.4.1}.

\vskip 2mm

\begin{figure}
\centering
\setlength{\unitlength}{8 mm}
\begin{picture}(16,10)\thicklines
\put(4,2.5){\oval(2,1)}
\put(4,5.5){\oval(2,1)}
\put(4,8.5){\oval(2,1)}
\put(9,2.5){\oval(2,1)}
\put(9,5.5){\oval(2,1)}
\put(9,8.5){\oval(2,1)}
\put(14,5.5){\oval(2,1)}
\put(4,8){\vector(0,-1){2}}
\put(4,5){\vector(0,-1){2}}
\put(9,8){\vector(0,-1){2}}
\put(9,3){\vector(0,1){2}}
\put(5,2.5){\vector(1,0){3}}
\put(5,8.5){\vector(1,0){3}}
\put(10,5.5){\vector(1,0){3}}
\put(1,5.5){\line(1,0){2}}
\put(1,8.5){\vector(1,0){2}}
\put(1,5.5){\line(0,1){3}}
\put(3,2.5){\vector(-1,0){2}}
\put(1.75,2.75){\line(1,-1){0.5}}
\put(1.75,2.25){\line(1,1){0.5}}
\put(8,5.5){\vector(-1,0){2}}
\put(6.75,5.75){\line(1,-1){0.5}}
\put(6.75,5.25){\line(1,1){0.5}}
\put(10,8.5){\vector(1,0){2}}
\put(10.75,8.75){\line(1,-1){0.5}}
\put(10.75,8.25){\line(1,1){0.5}}
\put(10,2.5){\vector(1,0){2}}
\put(10.75,2.75){\line(1,-1){0.5}}
\put(10.75,2.25){\line(1,1){0.5}}
\put(14,4.5){\oval(1,1)[b]}
\put(14.5,4.5){\line(0,1){0.5}}
\put(13.5,4.5){\vector(0,1){0.5}}
\put(3.8,8.5){$z_1$}
\put(3.8,5.5){$z_2$}
\put(3.8,2.5){$z_3$}
\put(8.8,8.5){$z_6$}
\put(8.8,5.5){$z_5$}
\put(8.8,2.5){$z_4$}
\put(13.8,5.5){$z_7$}
\put(6,9){$u=1$}
\put(6,3){$u=1$}
\put(10.5,9){$u=0$}
\put(10.5,6){$u=1$}
\put(10.5,3){$u=0$}
\put(13.5,3.5){$u=1$}
\put(13.5,3){$u=0$}
\put(1.2,7){$u=0$}
\put(4.2,7){$u=0$}
\put(1.5,1.5){$u=0$}
\put(4.2,4){$u=1$}
\put(10.5,1.5){$u=0$}
\put(6.5,6){$u=0$}
\end{picture}
\caption{State Transition Graph of aggregated BCN (\ref{4.6}) )\label{Fig.4.1}}
\end{figure}
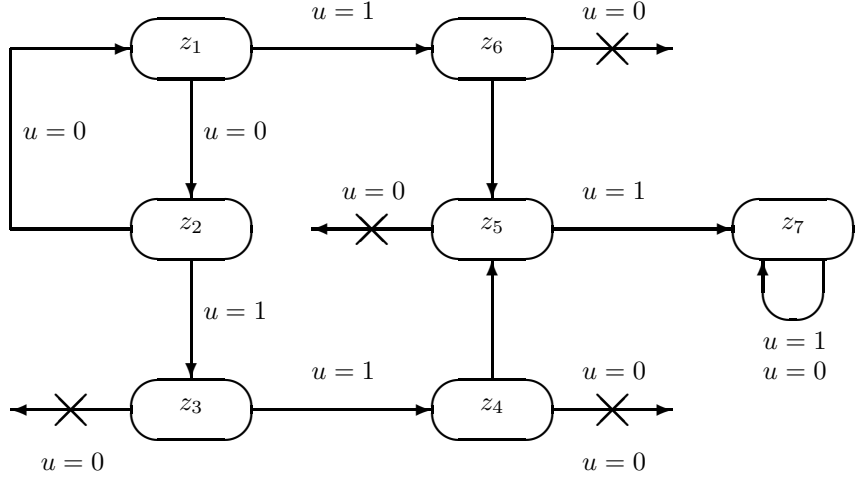

\vskip 2mm

\end{exa}

\section{Minimum Realization of BCN}

Consider a BNC (\ref{2.2.5}) with outputs (observers)
\begin{align}\label{5.1}
\begin{array}{l}
y_1(t)=\xi_1(x_1(t),\cdots,x_n(t)),\\
y_2(t)=\xi_2(x_1(t),\cdots,x_n(t)),\\
\vdots\\
y_r(t)=\xi_r(x_1(t),\cdots,x_n(t)).\\
\end{array}
\end{align}

Then the input-output BCN (\ref{2.2.5})-(\ref{5.1}) has  ASSR as
\begin{align}\label{5.2}
\begin{cases}
x(t+1)=Lu(t)x(t),\\
y(t)=Hx(t).
\end{cases}
\end{align}
After a coordinate change $T:x\ra z$, expressed by $z=Tx$, where $T\in {\cal L}_{2^n\times 2^n}$, (\ref{2.2.7}) becomes \cite{che11}
\begin{align}\label{5.3}
\begin{cases}
z(t+1)=\tilde{L}u(t)z(t),\\
y(t)=\tilde{H}z(t),
\end{cases}
\end{align}
where
$$
\begin{array}{l}
\tilde{L}=TL\left(I_{2^m}\otimes T^T\right),\\
\tilde{H}=HT^T.
\end{array}
$$

If under the coordinate frame $z$ (\ref{5.1}), expressed as (\ref{5.3}),  has the form of (\ref{3.1.2}), then it is clear that ${\cal Z}={\cal F}_{\ell}\{z^1\}$ is a control invariant subspace, containing ${\cal Y}$. In fact, we can ignore $z^2$ and give the following definition.

\begin{dfn}\label{d5.1} Consider BCN (\ref{5.1}), if there exists a subspace ${\cal Z}={\cal F}_{\ell}\{z^1_1,x^1_2,\cdots,z^1_r\}$ such that
\begin{align}\label{5.4}
\begin{cases}
z^1(t+1)=F^1(z^1,u),\\
y(t)=\xi(z^1(t)),
\end{cases}
\end{align}
then (\ref{5.4}) is called a realization of (\ref{2.2.5})-(\ref{5.1}).
\end{dfn}

\begin{rem}\label{r5.2}
\begin{itemize}
\item[(i)] From Definition \ref{d5.1}, ${\cal Z}$ is a control invariant subspace, containing ${\cal Y}$.

\item[(ii)] In Definition \ref{d5.1} ${\cal Z}$ is not required to be a regular subspace.

\item[(iii)] It is obvious that (\ref{5.4}) and (\ref{2.2.5})-(\ref{5.1}) have the same input-output mapping.
\end{itemize}
\end{rem}

\begin{dfn}\label{d5.2} Consider BCN (\ref{5.1}), if  ${\cal Z}={\cal F}_{\ell}\{z^1_1,x^1_2,\cdots,z^1_r\}$ is the smallest control invariant subspace containing ${\cal Y}$, then the corresponding BN (\ref{5.4}) is called the minimum realization of (\ref{2.2.5})-(\ref{5.1}).
\end{dfn}

\begin{prp}\label{p5.3} Assume ${\cal Z}={\cal F}_{\ell}\{z^1_1,x^1_2,\cdots,z^1_r\}$ is the smallest control invariant subspace containing ${\cal Y}$ and ${\cal Z}=Gx$. then
\begin{itemize}
\item[(i)]
there exists a set of logical matrix $H_i\in {\cal L}_{r\times r}$, $\i=1,2,\cdots,2^m$ such that
\begin{align}\label{5.401}
GM_i=H_iG,\quad i=1,2,\cdots,2^m;
\end{align}
\item[(ii)] the minimum realization of (\ref{2.2.5})-(\ref{5.1}) has its ASSR  as
\begin{align}\label{5.5}
\begin{cases}
z^1(t+1)=Hu(t)z^1(t),\\
y(t)=\Xi z^1(t),
\end{cases}
\end{align}
where $\Xi$ is the structure matrix of $\xi$, and
$$
H=[H_1,H_2,\cdots,H_{2^m}].
$$
\end{itemize}
\end{prp}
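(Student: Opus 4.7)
The plan is to unfold the control invariance of $\mathcal{Z}$ into invariance under each of the constant control choices, apply the earlier invariant subspace characterization from Theorem \ref{t3.1.3} block by block, and then reassemble everything via the block structure of $L$ given in (\ref{4.1}).

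For part (i), I would start from Definition \ref{d4.2}(i): control invariance means $\mathcal{Z}$ is $Lu$-invariant for every $u$, so in particular for each canonical control $u = \d_{2^m}^i$. By (\ref{4.1}) we have $L\d_{2^m}^i = M_i$, so $\mathcal{Z}$ is $M_i$-invariant in the sense of Definition \ref{d3.1.2} for every $i$. Applying Theorem \ref{t3.1.3} to each $M_i$ separately produces logical matrices $H_i$ with $GM_i = H_i G$. As pointed out in Remark \ref{r3.2.4}, only the algebraic relation (\ref{3.1.4}) is needed and the regularity of $\mathcal{Z}$ can be dispensed with; an explicit candidate is the $H_i^* = GM_iG^{\mathrm T}(GG^{\mathrm T})^{-1}$ of Remark \ref{r3.1.4}, valid once $G$ has been trimmed to full row rank.

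For part (ii), the state dynamics comes from a direct substitution chain: $z^1(t+1) = Gx(t+1) = GLu(t)x(t) = G[M_1,\ldots,M_{2^m}]u(t)x(t) = [GM_1,\ldots,GM_{2^m}]u(t)x(t) = [H_1G,\ldots,H_{2^m}G]u(t)x(t)$, and one then has to pull the single factor $G$ out to the right to conclude $z^1(t+1) = Hu(t)z^1(t)$ with $H=[H_1,\ldots,H_{2^m}]$. For the output equation, the hypothesis $\mathcal{Y}\subset \mathcal{Z}$ means that each $y_j$ lies in $\mathcal{F}_\ell\{z^1_1,\ldots,z^1_r\}$, so by Proposition \ref{p2.2.2}(i) each $y_j$ has a unique structure matrix with respect to $z^1 = \ltimes_{i=1}^r z^1_i$; assembling these componentwise with the Khatri--Rao product yields the claimed $\Xi$ with $y(t) = \Xi z^1(t)$.

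The main obstacle I expect is the block-column rearrangement in the dynamics computation: the expressions $[H_1G,\ldots,H_{2^m}G]u(t)x(t)$ and $[H_1,\ldots,H_{2^m}]u(t)Gx(t)$ are not literally identical as STP strings, and the cleanest path is to expand $u(t)$ as some $\d_{2^m}^{\alpha}$, observe that each side then collapses to the single selected block $H_\alpha G x(t)$, and invoke Proposition \ref{p2.1.7} to handle the factor order. A secondary subtlety, flagged in Remark \ref{r5.2}(ii), is that $\mathcal{Z}$ is not assumed regular, so the identification $y(t) = \Xi z^1(t)$ must not rely on completing $z^1$ to a coordinate frame; the argument above uses only the containment $\mathcal{Y}\subset\mathcal{Z}$, which is exactly what is available.
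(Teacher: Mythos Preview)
Your proposal is correct and follows essentially the same route as the paper: the paper does not give a standalone proof of Proposition~\ref{p5.3}, but the derivation leading to (\ref{4.2})--(\ref{4.3}) in Section~4 is exactly the argument you outline, i.e., unpack control invariance as $M_i$-invariance for each $i$ via Definition~\ref{d4.1}, invoke Theorem~\ref{t3.1.3} to get $GM_i=H_iG$, and substitute to obtain $z(t+1)=Hu(t)z(t)$. Your discussion of the block rearrangement $[H_1G,\ldots,H_{2^m}G]u(t)x(t)=[H_1,\ldots,H_{2^m}]u(t)Gx(t)$ and of the output equation $y(t)=\Xi z^1(t)$ is in fact more careful than the paper's own treatment, which glosses over both points.
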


The following algorithm provides a way to construct the minimum realization of a BCN.

\begin{alg}\label{a5.4}
\begin{itemize}
\item Step 1:

Set
$$
{\cal O}_0=\left\{y_1,y_2,\cdots,y_p\right\}.
$$
Calculate
$$
{\cal O}_1=\{yM_1,yM_2,\cdots,YM_{2^m}\;|\;y\in {\cal O}_0\}\backslash \{{\cal O}_0\}.
$$
\item Step s: ($s>0$)

Calculate
$$
{\cal O}_{s+1}=\{yM_1,yM_2,\cdots,yM_{2^m}\;|\;y\in {\cal O}_s\}\backslash \{ {\cal O}_r\;|\;r=,0,1,\cdots,s\}.
$$

\item Last Step.
If
$$
{\cal O}_{s^*+1}=\emptyset.
$$
then
$$
{\cal Z}^*:={\cal F}_{\ell}\{{\cal O}_{r}\;|\; r=0,1,\cdots,s^*\}
$$
is the smallest control invariant subspace containing $y$.
\end{itemize}
\end{alg}

Assume ${\cal Z}^*={\cal F}_{\ell}\{z_1,z_2,\cdots,z_r\}$, set
$z=\ltimes_{i=1}^rz_i$, then 
\begin{align}\label{5.6}
\begin{array}{l}
z(t+1)=\left[H_1,H_2,\cdots,H_{2^m}\right]u(t)z(t),\\
y(t)=\Xi z(t)
\end{array}
\end{align}
is the minimum realization of BCN (\ref{2.2.5})-(\ref{5.1}).

Next, we consider an example.

\begin{exa}\label{e5.5}

Consider a BCN, with its ASSR as
\begin{align}\label{5.7}
\begin{cases}
x(t+1)=Lu(t)x(t),\\
y(t)=\Xi x(t),
\end{cases}
\end{align}
where
$x(t)=\ltimes_{i=1}^nx_i(t)$, $u(t)=u_1(t)u_2(t)$, and
$$
L=[M_1,M_2,M_3,M_4],
$$
with
$$
\begin{array}{cc}
M_1=
\left(\begin{array}{cc}
\begin{bmatrix}
0&0&1\\
1&0&0\\
0&1&0\\
\end{bmatrix}&\0\\
\0&{\bf X}\\
\end{array}\right),
&
M_2=
\left(\begin{array}{cc}
\begin{bmatrix}
0&1&0\\
1&0&0\\
0&0&1\\
\end{bmatrix}&\0\\
\0&{\bf X}\\
\end{array}\right),\\
M_3=
\left(\begin{array}{cc}
\begin{bmatrix}
1&0&0\\
0&1&0\\
0&0&1\\
\end{bmatrix}&\0\\
\0&{\bf X}\\
\end{array}\right),
&
M_4=
\left(\begin{array}{cc}
\begin{bmatrix}
0&0&1\\
0&1&0\\
1&0&0\\
\end{bmatrix}&\0\\
\0&{\bf X}\\
\end{array}\right),\\
\end{array}
$$
where ${\bf X}\in {\cal L}_{(2^n-3)\times (2^n-3)}$ is an uncertain logical matrix.
$$
\Xi=\d_2[1,2,1,\underbrace{2,2,\cdots,2}_{2^n-3}].
$$

Denote $y_1=y$, then it is easy to calculate that
$$
\begin{array}{l}
y_1M_1=\d_2[2,1,1,2,\cdots,2]:=y_2,\\
y_1M_2=y_2,\\
y_1M_3=y_1,\\
y_1M_4=y_1,\\
y_2M_1=\d_2[1,1,2,2,\cdots,2]:=y_3,\\
y_2M_2=y_1,\\
y_2M_3==y_2,\\
y_2M_4==y_3,\\
y_3M_1=y_1,\\
y_3M_2==y_3,\\
y_3M_3=y_3,\\
y_3M_4=y_2.\\
\end{array}
$$
Let
$$
z_1=y_1,\quad z_2=y_2,\quad z_3=y_3.
$$
Hence we have
$$
\begin{array}{l}
z_1(t+1)=[Z_2,Z_2,Z_1,Z_1]u(t)z(t),\\
z_2(t+1)=[Z_3,Z_1,Z_2,Z_3]u(t)z(t),\\
z_3(t+1)=[Z_1,Z_3,Z_3,Z_2]u(t)z(t),\\
\end{array}
$$
where $u(t)=u_1(t)u_2(t)$, $z(t)=z_1(t)z_2(t)z_3(t)$, and
$$
\begin{array}{l}
Z_1=I_2\otimes \J^T_4=\d_2[1,1,1,1,2,2,2,2],\\
Z_2=\J_2\otimes I_2\otimes \J_2=\d_2[1,1,2,2,1,1,2,2],\\
Z_3=\J_4\otimes I_2=\d_2[1,2,1,2,1,2,1,2].
\end{array}
$$

Finally, the minimum realization of (\ref{5.7}) is obtained as
\begin{align}\label{5.8}
\begin{cases}
z(t+1)=L^*u(t)z(t),\\
y(t)=Z_1z(t),
\end{cases}
\end{align}
where
$$
\begin{array}{ccl}
L^*&=\d_8[&1,3,5,7,2,4,6,8,1,2,5,6,3,4,7,8,\\
~&~       &1,2,3,4,5,6,7,8,1,3,2,4,5,7,6,8].
\end{array}
$$

The state-transition graph of this minimum realization is depicted in Fig. \ref{Fig.5.1}.

\vskip 2mm

\begin{figure}
\centering
\setlength{\unitlength}{8 mm}
\begin{picture}(9,12)\thicklines
\put(2,2.5){\oval(2,1)}
\put(2,8.5){\oval(2,1)}
\put(7,8.5){\oval(2,1)}
\put(1.5,3){\vector(0,1){5}}
\put(6.5,8){\vector(-1,-1){5}}
\put(2.5,3){\vector(1,1){5}}
\put(6,8.25){\vector(-1,0){3}}
\put(3,8.75){\vector(1,0){3}}
\put(1.5,3){\vector(0,1){5}}
\put(2,1.5){\oval(1,1)[b]}
\put(2.5,1.5){\line(0,1){0.5}}
\put(1.5,1.5){\vector(0,1){0.5}}
\put(2,9.5){\oval(1,1)[t]}
\put(1.5,9.5){\line(0,-1){0.5}}
\put(2.5,9.5){\vector(0,-1){0.5}}
\put(7,9.5){\oval(1,1)[t]}
\put(6.5,9.5){\line(0,-1){0.5}}
\put(7.5,9.5){\vector(0,-1){0.5}}
\put(1,8.5){\vector(-1,0){1}}
\put(0.2,8.8){$y$}
\put(1.8,8.5){$z_1$}
\put(1.8,2.5){$z_2$}
\put(6.8,8.5){$z_3$}
\put(3.5,9.2){$u=\d_4^1,\d_4^2$}
\put(4,7.8){$u=\d_4^2$}
\put(0,5.5){$u=\d_4^1$}
\put(2.5,6){$u=\d_4^1,\d_4^2$}
\put(4.5,4.5){$u=\d_4^4$}
\put(1,10.2){$u=\d_4^3,\d_4^4$}
\put(1,0.5){$u=\d_4^2,\d_4^3$}
\put(6.5,10.2){$u=\d_4^3$}
\end{picture}
\caption{State Transition Graph of aggregated BCN (\ref{5.8}) )\label{Fig.5.1}}
\end{figure}
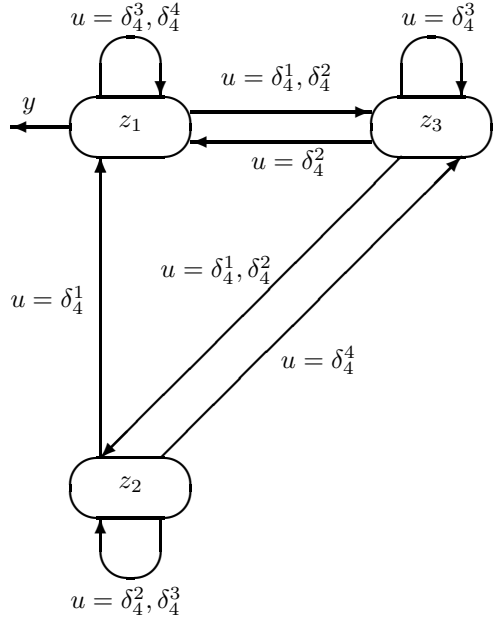

\end{exa}

Motivated by Example \ref{e5.5}, the following result is easily verifiable.

\begin{prp}\label{p5.6} Consider BCN (\ref{2.2.5})-(\ref{5.1}). If there exists a coordinate change
$$
z=Tx,
$$
such that
$$
TM_iT^T=\begin{bmatrix}
J_1^i&\0&\cdots&\0\\
\0&J_2^i&\cdots&\0\\
~&~&\ddots&\0\\
\0&\0&\cdots&J_s^i
\end{bmatrix},\quad i=1,2,\cdots,2^m,
$$
where $z=(z^1,z^2,\cdots,z^s)$ and $z^k$ corresponds to  $J_k^i$. Moreover, if $y\in {\cal F}_{\ell}\{z^k\}$, then there exists a realization
\begin{align}\label{5.9}
\begin{cases}
z(t+1)=\left[J^1_k,J^2_k,\cdots,J^{2^m}_k\right]u(t)z(t),\\
y(t)=\Xi_kz(t).
\end{cases}
\end{align}
Moreover, if $J^i_k$, $i=1,2,\cdots,2^m$ can not be further diagonized simultaneously for any $1\leq k\leq s$, then (\ref{5.9}) is a minimum realization.
\end{prp}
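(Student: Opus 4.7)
The plan is to break the proposition into two parts: (i) exhibit the realization (5.9) directly by performing the coordinate change and isolating the $k$-th block, and (ii) establish minimality via Definition 5.2 and the simultaneous indecomposability hypothesis.

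For (i), I would first apply the transformation rules already recorded in (5.3) to rewrite the ASSR of the BCN in the $z$-coordinates, so that $\tilde L = T L(I_{2^m}\otimes T^{\mathrm T})$ and $\tilde M_i = T M_i T^{\mathrm T}$ for each $i=1,\dots,2^m$. By hypothesis each $\tilde M_i$ is block-diagonal with diagonal blocks $J_1^i,\dots,J_s^i$, and the coordinate vector splits accordingly as $z=(z^1,\dots,z^s)$. Reading off the $k$-th block row, one sees immediately that $z^k(t+1)=J_k^{i} z^k(t)$ whenever $u(t)=\delta_{2^m}^{i}$, which packages via the STP identity into $z^k(t+1)=[J_k^1,J_k^2,\dots,J_k^{2^m}]u(t)z^k(t)$. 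Since $y\in\mathcal F_\ell\{z^k\}$, there is a structure matrix $\Xi_k$ with $y(t)=\Xi_k z^k(t)$. Renaming $z^k$ as $z$ gives (5.9) and shows that $\mathcal Z_k:=\mathcal F_\ell\{z^k\}$ is a control invariant subspace containing $\mathcal Y$.

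For (ii), the minimality claim amounts to showing that $\mathcal Z_k$ equals the smallest control invariant subspace $\overline{\mathcal Y}$ containing $\mathcal Y$. By construction $\overline{\mathcal Y}\subseteq \mathcal Z_k$. For the reverse inclusion I would argue contrapositively: suppose there is a proper control invariant subspace $\mathcal W\subsetneq \mathcal Z_k$ containing $\mathcal Y$, with structure matrix $G$ satisfying $GJ_k^i = H_i G$ for every $i$ (by Theorem 3.1.3 applied block-wise). Extending $G$ by a complementary block to a full change of coordinates inside $\mathcal Z_k$ would exhibit a simultaneous block-triangular, hence (since logical matrices preserve $\Delta$) simultaneous block-diagonal, decomposition of the tuple $(J_k^1,\dots,J_k^{2^m})$, contradicting the hypothesis that no such further simultaneous diagonalization exists. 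Hence $\mathcal Z_k=\overline{\mathcal Y}$ and (5.9) is the minimum realization in the sense of Definition 5.2.

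The main obstacle is the last step: turning a hypothetical proper control invariant subspace into an explicit simultaneous block-diagonalization of the $J_k^i$. The subtlety is that Theorem 3.1.3 only produces a block-triangular reduction via the quotient matrix $H^*=QMQ^{\mathrm T}(QQ^{\mathrm T})^{-1}$, and one needs the fact that the matrices involved are logical (hence stochastic-like) to promote triangular to diagonal, which is exactly why the hypothesis is phrased in terms of simultaneous diagonalizability rather than mere irreducibility. Once this conversion is justified, the rest of the argument is routine bookkeeping with STP and the Khatri–Rao product identity of Lemma 3.3.2.
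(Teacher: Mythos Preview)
The paper does not actually supply a proof of Proposition~5.6: it simply declares the result ``easily verifiable'' on the strength of Example~5.5. So there is nothing to compare your approach against at the level of detail; the question is only whether your argument is sound.

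Your part~(i) is correct and is exactly the computation the paper's framework suggests: the block-diagonal form of each $TM_iT^{\mathrm T}$ means the $k$-th block of $z$ evolves autonomously under every control value, and since $y\in\mathcal F_\ell\{z^k\}$ the output factors through $z^k$. That gives the realization~(5.9).

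Your part~(ii), however, has a real gap, and it lies precisely where you suspected. The step ``block-triangular, hence (since logical matrices preserve $\Delta$) block-diagonal'' is not valid, and more fundamentally the two structures you are trying to relate are not the same thing. A control invariant subspace in the paper's sense (Theorem~3.3, $QM_i=H_iQ$) is a \emph{quotient} structure: it lumps states into groups so that each $M_i$ induces a well-defined map on the groups. A simultaneous block-diagonalization, by contrast, is a partition of the state set into subsets each of which is \emph{mapped into itself} by every $M_i$. These are different: take $m=0$ and $M=\delta_4[2,3,4,1]$ (a single $4$-cycle). The partition $\{1,3\}\cup\{2,4\}$ yields a perfectly good invariant subspace (the quotient is a $2$-cycle), so there is a strictly smaller realization whenever $y$ factors through that partition; yet $M$ admits no nontrivial block-diagonal decomposition because a single cycle has no proper nonempty invariant subset. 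Thus ``there exists a proper control invariant subspace of $\mathcal Z_k$ containing $\mathcal Y$'' does \emph{not} imply ``the $J_k^i$ can be further simultaneously block-diagonalized,'' and your contrapositive breaks down.

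In short, the minimality clause of the proposition, read against Definition~5.2, does not follow from simultaneous indecomposability alone; either the statement is using ``minimum realization'' in a looser sense (smallest among realizations obtainable by block-diagonal splitting, which is what Example~5.5 illustrates), or an additional hypothesis is implicitly in play. Your write-up should flag this rather than claim a proof.
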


\begin{rem}\label{r5.7}
\begin{itemize}
\item[(i)] Minimum realization can also be considered as a kind of aggregations, which separate states into two categories: related states and unrelated states. Then only related states are modeled.

\item[(ii)] For a large scale BN, we can inject controls on different nodes and observe some other nodes (which are considered as outputs). Then observe the input-output relations to investigate the minimum realization, which reveals part structure of the BN. By changing input nodes and output nodes, another part structure may be revealed. The minimum realizations might be of much smaller sizes, which makes the investigations easier. This method may provide a way to solve the problem of computational complexity.

\end{itemize}
\end{rem}

An alternative way to deal with a large-scale BN is to observe some special interested states. Then the observers, as a set of logical functions, can be considered as separating functions. Then we may define the follows:

\begin{dfn}\label{d5.8} A BN with some outputs is called an observe-based BN. Using observers as a set of logical functions, the dynamic equations of the minimum-invariant subspace containing observers are called the observe-based minimum realization of the observe-based BN.
\end{dfn}

In fact, through observed data, we may construct the dynamic equations of the observe-based minimum realization. In this way, part of the structure of overall BN can be constructed. Using different observes, the interested parts of structure of overall BN might be construct.

\section{Conclusion}

In this paper a logical function is considered as an index function of a subset of nodes of a BN or BCN. Using this idea, a set of logical functions are used as separating functions to aggregate nodes. Then the (minimum) invariant subspace containing the preassigned set of logical functions, is constructed. Furthermore, the dynamic equations for the invariant subspace, which represents the aggregated nodes, are obtained. Then the (minimum) invariant subspace of BNC is also defined and the corresponding dynamic equations are also constructed. Finally, as the outputs of a BC/BNC are considered as the set of separating functions, the minimum realization of a BCN (or the observe-based minimum realization for BC) is defined, and their properties are investigated.

When a BN/BCN is of large scale, the structure matrix of overall BN might be huge and practically uncomputable. Using input-output realization and observe-based minimum realization, the interested parts of structure of the BN could be obtained. These might be much smaller sub-BN may dominate the behaviors of whole BN. Hence, this technique may provide an efficient way to solve the computational complexity of large scale BN/BCN. 

\vskip 5mm

{\bf Acknowledgment} This work was completed when the second and third authors visiting the Center of STP Theory and Applications.

\vskip 2mm

\section{Appendix}

\begin{itemize}
\item[(i)] The structure matrix of BN (\ref{3.2.2}):
\begin{tiny}
$$
\begin{array}{lrrrrrrrrrrrrrrrr}
M=\d_{512}[&  1&  1&  1&  2&  1&  1&  5&  8&  1& 10&  2& 10&  1& 10&  6& 16\\
          ~&  1&  9&  1& 12& 33& 43& 39& 48&  9& 10& 26& 28& 41& 44& 64& 64\\
          ~&  1&  1&  5&  6& 37& 37& 37& 40&  1& 10& 22& 30& 37& 46& 54& 64\\
          ~& 33& 41& 53& 64& 37& 47& 55& 64& 57& 58& 62& 64& 61& 64& 64& 64\\
          ~&  1&  9&  1& 10&  1&  9&  5& 16& 73& 74& 74& 74& 73& 74& 78& 80\\
          ~&  9&  9&  9& 12& 41& 43& 47& 48& 73& 74& 90& 92&105&108&128&128\\
          ~&  1&  9&  5& 14& 37& 45& 37& 48& 73& 74& 94& 94&109&110&126&128\\
          ~& 41& 41& 61& 64& 45& 47& 63& 64&121&122&126&128&125&128&128&128\\
          ~&  1&  1&  1&  2&  1&  1&  5&  8& 65& 74& 82& 90& 65& 74& 86& 96\\
          ~&  1&  9& 17& 28& 33& 43& 55& 64& 89& 90& 90& 92&121&124&128&128\\
          ~&257&257&277&278&293&293&309&312&337&346&342&350&373&382&374&384\\
          ~&305&313&309&320&309&319&311&320&377&378&382&384&381&384&384&384\\
          ~& 65& 73& 65& 74& 65& 73& 69& 80& 73& 74& 90& 90& 73& 74& 94& 96\\
          ~&201&201&217&220&233&235&255&256&217&218&218&220&249&252&256&256\\
          ~&321&329&341&350&357&365&373&384&345&346&350&350&381&382&382&384\\
          ~&505&505&509&512&509&511&511&512&505&506&510&512&509&512&512&512\\
          ~&  1&  1&  1&  2& 33& 33& 37& 40&  1& 10&  2& 10& 33& 42& 38& 48\\
          ~& 33& 41& 33& 44& 33& 43& 39& 48& 41& 42& 58& 60& 41& 44& 64& 64\\
          ~&289&289&293&294&293&293&293&296&289&298&310&318&293&302&310&320\\
          ~&289&297&309&320&293&303&311&320&313&314&318&320&317&320&320&320\\
          ~&  1&  9&  1& 10& 33& 41& 37& 48& 73& 74& 74& 74&105&106&110&112\\
          ~&169&169&169&172&169&171&175&176&233&234&250&252&233&236&256&256\\
          ~&289&297&293&302&293&301&293&304&361&362&382&382&365&366&382&384\\
          ~&425&425&445&448&429&431&447&448&505&506&510&512&509&512&512&512\\
          ~&257&257&257&258&289&289&293&296&321&330&338&346&353&362&374&384\\
          ~&417&425&433&444&417&427&439&448&505&506&506&508&505&508&512&512\\
          ~&289&289&309&310&293&293&309&312&369&378&374&382&373&382&374&384\\
          ~&433&441&437&448&437&447&439&448&505&506&510&512&509&512&512&512\\
          ~&449&457&449&458&481&489&485&496&457&458&474&474&489&490&510&512\\
          ~&489&489&505&508&489&491&511&512&505&506&506&508&505&508&512&512\\
          ~&481&489&501&510&485&493&501&512&505&506&510&510&509&510&510&512\\
          ~&505&505&509&512&509&511&511&512&505&506&510&512&509&512&512&512].
\end{array}
$$
\end{tiny}

\item[(ii)] Structure matrix for Example \ref{e4.3}.

\begin{tiny}
$$
\begin{array}{lrrrrrrrrrrrrrrrr}
N=\d_{512}[&  1&  1&  1&  2&  1&  1&  5&  8&  1& 10&  2& 10&  1& 10&  6& 16\\
            ~&  1&  9&  1& 12&  1& 11&  7& 16&  9& 10& 26& 28&  9& 12& 32& 32\\
            ~&  1&  1&  5&  6&  5&  5&  5&  8&  1& 10& 22& 30&  5& 14& 22& 32\\
            ~&  1&  9& 21& 32& 37& 47& 55& 64& 25& 26& 30& 32& 61& 64& 64& 64\\
            ~&  1&  9&  1& 10&  1&  9&  5& 16& 73& 74& 74& 74& 73& 74& 78& 80\\
            ~&  9&  9&  9& 12&  9& 11& 15& 16& 73& 74& 90& 92& 73& 76& 96& 96\\
            ~&  1&  9&  5& 14&  5& 13&  5& 16& 73& 74& 94& 94& 77& 78& 94& 96\\
            ~&  9&  9& 29& 32& 45& 47& 63& 64& 89& 90& 94& 96&125&128&128&128\\
            ~&  1&  1&  1&  2&  1&  1&  5&  8& 65& 74& 82& 90& 65& 74& 86& 96\\
            ~&  1&  9& 17& 28&  1& 11& 23& 32& 89& 90& 90& 92& 89& 92& 96& 96\\
            ~&257&257&277&278&261&261&277&280&337&346&342&350&341&350&342&352\\
            ~&273&281&277&288&309&319&311&320&345&346&350&352&381&384&384&384\\
            ~& 65& 73& 65& 74& 65& 73& 69& 80& 73& 74& 90& 90& 73& 74& 94& 96\\
            ~&201&201&217&220&201&203&223&224&217&218&218&220&217&220&224&224\\
            ~&321&329&341&350&325&333&341&352&345&346&350&350&349&350&350&352\\
            ~&473&473&477&480&509&511&511&512&473&474&478&480&509&512&512&512\\
            ~&  1&  1&  1&  2&  1&  1&  5&  8&  1& 10&  2& 10&  1& 10&  6& 16\\
            ~&  1&  9&  1& 12& 33& 43& 39& 48&  9& 10& 26& 28& 41& 44& 64& 64\\
            ~&257&257&261&262&293&293&293&296&257&266&278&286&293&302&310&320\\
            ~&289&297&309&320&293&303&311&320&313&314&318&320&317&320&320&320\\
            ~&  1&  9&  1& 10&  1&  9&  5& 16& 73& 74& 74& 74& 73& 74& 78& 80\\
            ~&137&137&137&140&169&171&175&176&201&202&218&220&233&236&256&256\\
            ~&257&265&261&270&293&301&293&304&329&330&350&350&365&366&382&384\\
            ~&425&425&445&448&429&431&447&448&505&506&510&512&509&512&512&512\\
            ~&257&257&257&258&257&257&261&264&321&330&338&346&321&330&342&352\\
            ~&385&393&401&412&417&427&439&448&473&474&474&476&505&508&512&512\\
            ~&257&257&277&278&293&293&309&312&337&346&342&350&373&382&374&384\\
            ~&433&441&437&448&437&447&439&448&505&506&510&512&509&512&512&512\\
            ~&449&457&449&458&449&457&453&464&457&458&474&474&457&458&478&480\\
            ~&457&457&473&476&489&491&511&512&473&474&474&476&505&508&512&512\\
            ~&449&457&469&478&485&493&501&512&473&474&478&478&509&510&510&512\\
            ~&505&505&509&512&509&511&511&512&505&506&510&512&509&512&512&512].
            \end{array}
            $$

\end{tiny}

\end{itemize}

\end{document}